\newcommand {\citeAY} [1] {\citeNP {#1}}%
\newcommand {\citeAPY}[1] {\citeN  {#1}}%
\renewcommand {\showoriginalref}[1]{}
\renewcommand {\showCODEN}[1]{}
\renewcommand {\showISSN}[1]{}
\renewcommand {\showMR}[3]{}
\newcommand\eq[1] {(\ref{#1})}
\newcommand{\bfm}[1]{\mbox{\boldmath ${#1}$}}
\newcommand{\beqa}{\begin{eqnarray}}
\newcommand{\eeqa}[1]{\label{#1}\end{eqnarray}}
\newcommand{\beq}{\begin{equation}}
\newcommand{\eeq}[1]{\label{#1}\end{equation}}
\newcommand{\bbm}{\begin{bmatrix}}
\newcommand{\ebm}{\end{bmatrix}}
\newcommand{\bpm}{\begin{pmatrix}}
\newcommand{\epm}{\end{pmatrix}}
\newcommand{\Divop}{\mathrm{div \,}}
\newcommand{\Curlop}{\mathrm{curl\,}}
\newcommand{\Real}{\mathop{\rm Re}\nolimits}
\newcommand{\Imag}{\mathop{\rm Im}\nolimits}
\newtheorem{Thm}{Theorem}
\newtheorem{Def}[Thm]{Definition}
\newtheorem{Pro}[Thm]{Proposition}
\newtheorem{Cor}[Thm]{Corollary}
\newtheorem{Rem}[Thm]{Remark}
\newtheorem{Asm}[Thm]{Assumption}
\newcommand{\BGG}{\bfm\Gamma}
\newcommand{\BGL}{\bfm\Lambda}
\newcommand{\BGP}{\bfm\Pi}
\newcommand{\CB}{{\cal B}}
\newcommand{\CD}{{\cal D}}
\newcommand{\CE}{{\cal E}}
\newcommand{\CH}{{\cal H}}
\newcommand{\CJ}{{\cal J}}
\newcommand{\CK}{{\cal K}}
\newcommand{\CL}{{\cal L}}
\newcommand{\CM}{{\cal M}}
\newcommand{\CN}{{\cal N}}
\newcommand{\CP}{{\cal P}}
\newcommand{\CU}{{\cal U}}
\newcommand{\CV}{{\cal V}}
\newcommand{\CW}{{\cal W}}
\newcommand{\BCQ}{{\bfm{\cal Q}}}
\def\Be{{\bf e}}
\def\Bj{{\bf j}}
\def\Bn{{\bf n}}
\def\Bu{{\bf u}}
\def\Bv{{\bf v}}
\def\BA{{\bf A}}
\def\BB{{\bf B}}
\def\BC{{\bf C}}
\def\BE{{\bf E}}
\def\BF{{\bf F}}
\def\BG{{\bf G}}
\def\BJ{{\bf J}}
\def\BK{{\bf K}}
\def\BL{{\bf L}}
\def\BM{{\bf M}}
\def\BP{{\bf P}}
\def\BQ{{\bf Q}}
\def\BR{{\bf R}}
\def\BT{{\bf T}}
\def\BU{{\bf U}}
\def\BZ{{\bf Z}}
\def\BLE{{\bf L_*}}
\def\BYE{{\bf Y_*}}
\def \M  {{\cal M}}
\def \ba {\begin{array}}
\def \ea {\end{array}}
\def \refe #1.{(\ref{#1})}
\def \reff #1.{figure~\ref{#1}}
\def \refs #1.{section~\ref{#1}}
\def \refss #1.{subsection~\ref{#1}}
\def \refD #1.{Definition~\ref{#1}}
\def \refT #1.{Theorem~\ref{#1}}
\def \refL #1.{Lemma~\ref{#1}}
\def \refC #1.{Corollary~\ref{#1}}
\def \refP #1.{Proposition~\ref{#1}}
\def \refR #1.{Remark~\ref{#1}}
\def \refE #1.{Example~\ref{#1}}
\def \refN #1.{Notation~\ref{#1}}
\newcommand{\bbC}{{\mathbb{C}}}
\begin{document}

\vspace{-1in}
\title{A rigorous approach to the field recursion method for two-component composites with isotropic phases}%
\label{chap:frm}
\author{Maxence Cassier$^{*}$, Aaron Welters$^{**}$ and Graeme W. Milton$^{*}$}
\date{$^{*}$\small{Department of Mathematics, University of Utah, Salt Lake City, UT 84112, USA} \\
$^{**}$\small{Department of Mathematical Sciences, Florida Institute of Technology, Melbourne, FL 32901, USA}\\
\small{Emails: cassier@math.utah.edu, awelters@fit.edu, milton@math.utah.edu}}
\maketitle

\date{}
\maketitle
\vspace{2ex}
\begin{abstract}
In this chapter we give a rigorous derivation of the field equation recursion method in the abstract theory of composites to two-component composites with isotropic phases. This method is of great interest since it has proven to be a powerful tool in developing sharp bounds for the effective tensor of a composite material. The reason is that the effective tensor $\BLE$ can be interpreted in the general framework of the abstract theory of composites as the $Z$-operator on a certain orthogonal $Z(2)$ subspace collection. The base case of the recursion starts with an orthogonal $Z(2)$ subspace collection on a Hilbert space $\CH$, the $Z$-problem, and the associated $Y$-problem. We provide some new conditions for the solvability of both the $Z$-problem and the associated $Y$-problem. We also give explicit representations of the associated $Z$-operator and $Y$-operator and study their analytical properties. An iteration method is then developed from a hierarchy of subspace collections and their associated operators which leads to a continued fraction representation of the initial effective tensor $\BLE$.
\end{abstract}
\vspace{3ex}
\noindent
\begin{mbox}
{\bf Key words:} field recursion method, abstract theory of composites, effective tensors, subspace collections, $Z$-operators, $Y$-operators, analytic properties
\end{mbox}
\vspace{3ex}
\vskip2mm

\section{Introduction}
\setcounter{equation}{0}
In this chapter we give a rigorous derivation of the field equation recursion method%
\index{field equation recursion method}
(\citeAY{Milton:1985:TCC}; Milton \citeyearNP{Milton:1987:MCEa}, \citeyearNP{Milton:1987:MCEb}, \citeyearNP{Milton:1991:FER}; \citeAY{Clark:1994:MEC}; \citeAY{Clark:1997:CFR} and Chapter 29 of \citeAY{Milton:2002:TOC} in the abstract theory of composites as described in Chapter 2 of \citeAY{Graeme:2016:ETC}).
The field equation recursion method utilizes information obtained from series expansions%
\index{series expansions}
to derive bounds on the effective
response. The series expansion information is encoded in a series of positive semidefinite weight and normalization matrices.%
\index{weight matrices}%
\index{normalization matrices}
The response could be that of a composite material, resistor network, spring network, or as shown in Chapter 3 of \citeAY{Graeme:2016:ETC}, that
of a multicomponent body with inclusions having a size comparable to that of the body. More generally, it is applicable to any problem that can be phrased,
in the abstract theory of composites, as a $Y$-problem%
\index{Yp@$Y$-problem}
or $Z$-problem,%
\index{Zp@$Z$-problem}
involving subspace collections%
\index{subspace collections}
with orthogonal subspaces: see Chapter 2 of \citeAY{Graeme:2016:ETC}. Thus it is applicable to the many equations discussed in Chapter 1 of \citeAY{Graeme:2016:ETC}, where the key identity is used to establish orthogonality of subspaces see Section
1.19 of Chapter 1 in \citeAY{Graeme:2016:ETC}, and Chapter 3. In particular, as follows from Section 3.12
in Chapter 3, it is applicable to the electromagnetism in multiphase bodies, and
thus provides an alternative to bounds based on variational principles%
\index{variational principles}
or on the analyticity, established in Chapter 3 and more rigorously in Chapter 4, of the Dirichlet-to-Neumann map%
\index{Dirichlet-to-Neumann map}
as functions of the frequency and component moduli. As shown in Chapter 4, the bounds can be used in an inverse way  to give information on the interior structure inside the body.

 For composite materials the $n$-th order series expansion terms, in the expansion of the effective moduli and fields in powers of the
contrast between the materials, can in theory be derived from information derived from $n$-point correlation functions.%
\index{correlation functions}
These incorporate, for example in the case of two-component composites, the information contained in the probability  that a polyhedron lands with all vertices in phase $1$ when dropped randomly in the composite. The series expansions were first derived by \citeAPY{Brown:1955:SMP} for two
phase conducting composites and were subsequently extended by others (\citeAY{Herring:1960:ERI}; \citeAY{Prager:1960:DIM};
\citeAY{Beran:1963:SPE}; \citeAY{Beran:1968:SCT}; \citeAY{Beran:1970:MFV};
\citeAY{Fokin:1969:CEE}; \citeAY{Dederichs:1973:VTE}; \citeAY{Hori:1973:STE}; \citeAY{Zeller:1973:ECP};
\citeAY{Gubernatis:1975:MEP}; \citeAY{Kroner:1977:BEE}; \citeAY{Willis:1981:VRM};
\citeAY{Milton:1982:NBE}; \citeAY{Phan-Thien:1982:NBE}; \citeAY{Sen:1989:ECA}; \citeAY{Torquato:1997:EST};
Tartar \citeyearNP{Tartar:1989:MSA}, \citeyearNP{Tartar:1990:MNA}; and \citeAY{Bruno:1991:TEB}). See also the book of
\citeAPY{Torquato:2002:RHM}, Chapters 14 and 15 in \cite{Milton:2002:TOC}, and Chapter 9 of \citeAY{Graeme:2016:ETC}, where series expansions are derived for coupled field problems. In practice these multipoint correlation functions are difficult to obtain: only three-point correlations can be directly obtained from cross-sectional photographs of three-dimensional materials, since in general four points have one point
lying outside the plane containing the other 3 points. A far more practical way of obtaining the series expansion information, when the geometry is known is to use
Fast Fourier Transform methods%
\index{Fast Fourier Transform methods}
(Moulinec and Suquet \citeyearNP{Moulinec:1994:FNM}, \citeyearNP{Moulinec:1998:NMC}; \citeAPY{Eyre:1999:FNS};
\citeAPY{Willot:2014:FBS}; and Chapter 8 of \citeAY{Graeme:2016:ETC}: when the composite is not periodic one can
take a cubic representative volume element of it, and periodically extend it. For the response of bodies, rather than composites, as discussed in Chapter 3 of \citeAY{Graeme:2016:ETC}, the computation of the series expansion is more difficult, as computation of the action of the nonlocal operators $\BGG_1$ and $\BGG_2$ requires one to know the Green's function%
\index{Green's function}
associated  with the body, with appropriate boundary conditions, and is not simply calculated through Fourier transforms.

For two-component composites the bounds calculated from the field expansion recursion method can alternatively be obtained from
variational principles%
\index{bounds!variational principles}
(\citeAY{Milton:1982:CTM}), or via the analytic method%
\index{bounds!analytic method}
(\citeAY{Bergman:1978:DCC}; \citeAY{Milton:1981:BTO};
\citeAY{Bergman:1993:HSF}). The rapid convergence of the bounds to the actual effective conductivity as successively more series
expansion coefficients are incorporated in the bounds has been shown in numerical studies (\citeAY{McPhedran:1981:BET}; \citeAY{Milton:1982:CTM}).
In a wider context these bounds mostly correspond to bounds on Stieltjes functions,%
\index{Stieltjes functions}
and are related
to Pad\'e approximants%
\index{Pad\'e approximants}
(\citeAY{Gragg:1968:TEB}; \citeAY{Baker:1969:BEB}; \citeAY{Field:1976:SSP}; \citeAY{Baker:1981:PAB}).
The real power of the method lies in its application to multiphase composites, where it allows one to obtain bounds on complex effective moduli that have not been obtained by any other method (see \citeAY{Milton:1987:MCEb} and section 29.6 of \citeAY{Milton:2002:TOC}). In fact many of the bounds are sharp in the context of the abstract theory of composites: there is a subspace collection with an associated response function that saturates the bounds. For real composite materials the differential constraints on the fields typically impose further restrictions on the subspace collections which make many bounds nonoptimal in this context. For example,
for two-dimensional isotropic composites of two isotropic conducting phases, the differential constraints imply that the normalization matrices%
\index{normalization matrices}
are simply identity matrices (and using this fact the bounds can be improved) while for other problems (even three dimensional conductivity, with two isotropic phases) the complete set of constraints on the normalization matrices and weight matrices is unknown.

The differential constraints are incorporated in the translation method%
\index{bounds!translation}
for obtaining bounds (\citeAY{Tartar:1979:ECH}, particularly theorem 8; \citeAY{Murat:1985:CVH}; \citeAY{Tartar:1985:EFC}; Lurie and Cherkaev \citeyearNP{Lurie:1982:AEC}, \citeyearNP{Lurie:1984:EEC} -
see also the books of the books of
\citeAY{Allaire:2002:SOH}; \citeAY{Cherkaev:2000:VMS}; \citeAY{Milton:2002:TOC}; \citeAY{Tartar:2009:GTH}; and \citeAY{Torquato:2002:RHM} and references therein), which however usually has the disadvantage that there are so many ``translations'' to choose from that it is hard to know which ones will produce the best bounds. The field equation recursion method has the advantage of being applicable to any problem that can be formulated in the abstract theory of composites, and the advantage that it produces bounds in a systematic way.

We have in mind effective tensors $\BLE=\BLE(\BL_1,\ldots, \BL_n)$ with tensors $\BL_j$ as variables, but for simplicity we will just consider in this chapter the case $n=2$ in which the $2$-variables are just scalars. Then in the abstract setting an effective tensor is just the $Z$-operator associated to a $Z$-problem on a $Z(n)$ subspace collection (see Chapter 29 in \citeAY{Milton:2002:TOC} and Chapters 2 and 7 of \citeAY{Graeme:2016:ETC}) and provides a solution to the $Z$-problem. The field recursion method is a way to systematically develop, via fractional linear transformations%
\index{fractional linear transformation}
at each level of the recursion, the continued fraction expansions%
\index{continued fraction expansion}
of the effective tensor $\BLE$, i.e., the initial $Z$-operator in the recursion. This method is powerful for developing sharp bounds on the effective tensor analogous to the sharp bounds of Stieltjes functions via its continued fraction representation using Pad\'e approximants%
\index{Pad\'e approximants}
(\citeAY{Gragg:1968:TEB}; \citeAY{Baker:1969:BEB}; \citeAY{Field:1976:SSP}; \citeAY{Baker:1981:PAB}). We caution, however, that the approach using
fractional linear transformations is similar in many ways to the Schur (\citeyearNP{Schur:1917a:UPI},\citeyearNP{Schur:1917b:UPI}) algorithm
(see also section 1.1 in \citeAY{Agler:2002:PIH})
for successively reducing in the number of interpolation points in the Nevanlinna--Pick interpolation problem.%
\index{Nevanlinna--Pick interpolation}
This algorithm has the disadvantage that computational or experimental errors can lead to problems even for a modest number of iterations
(Section 1.1 in \citeAY{Agler:2002:PIH}; \citeAY{Foias:2002:NPI}).

The rest of this chapter is organized as follows. In section \ref{QSecFormProb} we introduce, in the Hilbert space setting for the abstract theory of composites, the effective tensor $\BLE(L_1,L_2)$ and its properties as a function of the composite phases $L_1$ and $L_2$. This tensor is defined as the $Z$-operator associated to an orthogonal $Z(2)$ subspace collection in a Hilbert space $\CH$. Next, in section \ref{QSubSecBaseCase} we generate the $\BYE(L_1,L_2)$ tensor on the associated orthogonal $Y(2)$ subspace collection and develop its functional properties as well as prove the important linear fractional transformations that connect $\BLE$ and $\BYE$ together. We then introduce a new effective tensor $\BLE^{(1)}(L_1,L_2)$ on a new orthogonal $Z(2)$ subspace collection in a Hilbert space $\CH^{(1)}$ which is a subspace of the initial Hilbert space $\CH$. The base case of the recursion is then concluded by showing that these two tensors $\BYE$ and $\BLE^{(1)}$ are congruent (i.e., there exists an invertible operator $\BK$ such that $\BYE=\BK\BLE^{(1)}\BK^{\dag}$).
Finally, in section \ref{QSubsecInductionStep} we prove via induction (under additional assumptions) that we can repeat this process. In other words, we
can define an effective tensor $\BLE^{(i)}(L_1,L_2)$ on some $Z(2)$ subspace collection in a Hilbert space $\CH^{(i)}$ which is a subspace of $\CH^{(i-1)}$. This method provides a continued fraction representation of the effective tensor $\BLE$ in terms of the $\BLE^{(i)}$ and $\BYE^{(i)}$. Thus, if one can provide bounds at each level of this representation then one can obtain very tight bounds on $\BLE$, which is one of the most important applications of this field recursion method (\citeAY{Milton:1987:MCEb}; see also
Chapter 9 of \citeAY{Graeme:2016:ETC}).

A major achievement in this chapter is the use of the theory of Fredholm operators%
\index{Fredholm operator}
which allows us to significantly reduce, at each level of the induction, the necessary assumptions required to define the operators $\BLE^{(i)}$ and $\BYE^{(i)}$. And although our approach here is abstract, which simplifies mathematically many aspects of the field equation recursion method, we want to keep in mind that its development is motivated by its applications in the theory of composites. As such, we use throughout this chapter the quintessential example of the effective conductivity tensor for composite materials in quasistatic electromagnetism [see, for instance, \citeAY{Milton:1985:TCC}; \citeAY{Milton:1991:FER}; \citeAY{Clark:1994:MEC}; \citeAY{Clark:1995:OBC};
  \citeAY{Clark:1997:CFR}; \citeAY{Milton:2002:TOC}] in order to illustrate the theory.

This chapter assumes the reader is familiar with Chapters 1 and 2 of \citeAY{Graeme:2016:ETC}. Chapter 9 of \citeAY{Graeme:2016:ETC} is closely related to the subject matter of the chapter,
although need not be read beforehand. Like Chapter 4 of \citeAY{Graeme:2016:ETC} it is written in a rigorous mathematical style and care has been taken to explain most technical
definitions to ensure it is accessible to non-mathematicians.  

Before we proceed, let us introduce some notation and definitions:
\begin{itemize}
	\item We denote by $\CB(\CH)$ the Banach space of all bounded linear operators on a Hilbert space $\CH$ and endow this space with the operator norm.
	\item If $\BT \in \CB(\CH)$, we denote respectively by $\operatorname{Ran}\BT$, $\operatorname{Ker}\BT$ and $\BT^{\dag}$ its range, kernel (i.e., nullspace) and adjoint operator.
	\item If $\BT \in \CB(\CH)$, $\operatorname{Re}\BT$ and $\operatorname{Im}\BT$ stand for the bounded operators
	$$
	\operatorname{Re}\BT=\frac{\BT+\BT^{\dag} }{2} \, \mbox{ and } \, \operatorname{Im}\BT=\frac{\BT-\BT^{\dag} }{2i}.
	$$
\item An operator $\BT \in  \CB(\CH)$ is said to be positive definite (i.e., $\BT>0$) if
\begin{equation}\label{Qeq.posdef}
  \forall \Bu \in \CH\setminus \{{\bf 0} \}, \ (\Bu, \BT \Bu)>0,
\end{equation}
and negative definite if $-\BT$ is positive definite. If the inequality ``$>$'' in (\ref{Qeq.posdef}) is instead replaced by ``$\geq$'', then one says that $\BT$ is positive semidefinite.
\item An operator $\BT \in  \CB(\CH)$ is said to be coercive%
\index{coercivity property}
(or uniformly positive)%
\index{operators!uniformly positive}
if
$$
\exists \alpha>0 \mid   \ \forall \Bu \in \CH, \  (\Bu, \BT \Bu)\geq \alpha \, (\Bu,\Bu),
$$
and in that case we write $\BT \geq \alpha \mathbb{I}$. An operator is said uniformly negative%
\index{operators!uniformly negative}
if $-\BT$ is coercive.
	\item We denote by $\CE^{\perp}$ the orthogonal complement of a subspace $\CE$ of $\CH$,
	\item $\operatorname{cl}S$ stands for the topological closure of a set $S$ in a Hilbert space $\CH$ in the metric topology generated by the inner product norm.
\end{itemize}

\begin{Def}(multivariate analyticity)\label{QDef.Analyticity}
Let $U$ be an open set of $\bbC^{n}$ and $F$ a complex Banach space.  A function $f:U \to \BF$ is said to be analytic if it is differentiable on $\CU$. If we denote by $\BZ=(z_1,\cdots,z_n)$ the points of $U$, then by Hartogs' theorem%
\index{Hartogs' theorem}
(see Theorem 36.8 in chapter VIII, section 36, p.\ 271 of \citeNP{Mujica:1986:CAB}), $f$ is analytic in $U$ if and only if it is a differentiable function in each variable $z_i$ separately.
\end{Def}

\begin{Def}
Let $\BT\in \CB(\CH)$, we say that the operator $\BT$ is Fredholm%
\index{Fredholm operator}
if $\operatorname{Ran}\BT$ is a closed subspace of $\CH$ and
if $\operatorname{Ker}\BT$ and $\operatorname{Ker}\BT^{\dag}$ are finite dimensional spaces. In that case, the index%
\index{Fredholm operator!index}
 $\operatorname{ind}{\BT}$ of $\BT$ is defined by $$\operatorname{ind}{\BT}= \dim \operatorname{Ker}{\BT}-\dim \operatorname{Ker}{\BT^{\dag}}.$$
\end{Def}
\begin{Rem}
Using the well-known identity $\operatorname{cl}\operatorname{Ran}{\BT}=(\operatorname{Ker}{\BT^{\dag}})^{\perp}$, it follows that a Fredholm operator of index $0$%
\index{Fredholm operator!index $0$}
is invertible if and only if $\operatorname{Ker}\BT=0$.
\end{Rem}
We will use the following propositions:
\begin{Pro}\label{QPropFredholmSumEquiv}
An operator $\BT\in \CB(\CH)$ is a Fredholm operator of index $0$ if and only if can be written as $\BT=\BA+\BK$ with $\BA$ an invertible operator and $\BK$ a finite-rank operator,%
\index{operators!finite rank}
i.e., $\dim \operatorname{Ran}\BK < \infty$.
\end{Pro}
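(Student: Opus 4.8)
The plan is to prove the two implications separately, relying only on the structural consequences of the Fredholm hypothesis (finite-dimensional kernel and cokernel, closed range) together with the open mapping theorem, and for the converse on an elementary reduction to a finite-dimensional linear-algebra computation.

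For the forward implication, suppose $\BT$ is Fredholm of index $0$. Then $\operatorname{Ker}\BT$ and $\operatorname{Ker}\BT^{\dag}$ are finite-dimensional of the \emph{same} dimension $d$ (this is exactly $\operatorname{ind}\BT=0$), and since $\operatorname{Ran}\BT$ is closed the identity $\operatorname{cl}\operatorname{Ran}\BT=(\operatorname{Ker}\BT^{\dag})^{\perp}$ gives the orthogonal decomposition $\CH=\operatorname{Ran}\BT\oplus\operatorname{Ker}\BT^{\dag}$. I would pick any isomorphism $\operatorname{Ker}\BT\to\operatorname{Ker}\BT^{\dag}$ (possible since both have dimension $d$) and let $\BK$ be its extension to $\CH$ that vanishes on $(\operatorname{Ker}\BT)^{\perp}$; this $\BK$ is finite-rank. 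Then I claim $\BA:=\BT+\BK$ is a bijection: writing $\Bu=\Bu_0+\Bu_1$ along $\operatorname{Ker}\BT\oplus(\operatorname{Ker}\BT)^{\perp}$, the summand $\BT\Bu_1$ lies in $\operatorname{Ran}\BT$ while $\BK\Bu_0$ lies in the complementary space $\operatorname{Ker}\BT^{\dag}$, so both injectivity and surjectivity split into the corresponding properties of $\BT$ on $(\operatorname{Ker}\BT)^{\perp}$ (mapping onto $\operatorname{Ran}\BT$) and of $\BK$ on $\operatorname{Ker}\BT$ (mapping onto $\operatorname{Ker}\BT^{\dag}$). A bounded bijection is invertible by the open mapping theorem, so $\BA$ is invertible and $\BT=\BA+(-\BK)$, with $-\BK$ finite-rank, is the desired decomposition.

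For the converse, suppose $\BT=\BA+\BK$ with $\BA$ invertible and $\BK$ finite-rank, and factor $\BT=\BA(\mathbb{I}+\BF)$ where $\BF:=\BA^{-1}\BK$ is again finite-rank. Because $\BA$ is a homeomorphism it preserves closed ranges and, being invertible, leaves the dimensions of kernels and cokernels unchanged (using $\BT^{\dag}=(\mathbb{I}+\BF)^{\dag}\BA^{\dag}$ and invertibility of $\BA^{\dag}$); hence it suffices to show $\mathbb{I}+\BF$ is Fredholm of index $0$. Here I would introduce the finite-dimensional subspace $S:=\operatorname{Ran}\BF+\operatorname{Ran}\BF^{\dag}$. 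One checks that $S$ is invariant under both $\mathbb{I}+\BF$ and $(\mathbb{I}+\BF)^{\dag}=\mathbb{I}+\BF^{\dag}$, while on $S^{\perp}$ (which is contained in $\operatorname{Ker}\BF$, since $\operatorname{Ran}\BF^{\dag}\subseteq S$) the operator $\mathbb{I}+\BF$ acts as the identity. Thus $\mathbb{I}+\BF$ is block diagonal for $\CH=S\oplus S^{\perp}$: its range is $(\mathbb{I}+\BF)(S)\oplus S^{\perp}$, which is closed as the orthogonal sum of a finite-dimensional space and a closed space, and both $\operatorname{Ker}(\mathbb{I}+\BF)$ and $\operatorname{Ker}(\mathbb{I}+\BF)^{\dag}$ coincide with the kernels of the compression of $\mathbb{I}+\BF$ to $S$ and of its adjoint on $S$. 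Since on the finite-dimensional space $S$ an endomorphism and its adjoint have equal rank, these two kernels have equal dimension, giving $\operatorname{ind}(\mathbb{I}+\BF)=0$.

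I expect the main obstacle to be the converse, specifically the simultaneous verification of closed range and of index $0$ (not merely Fredholmness) for a finite-rank perturbation of the identity. The temptation is to invoke the general stability of the Fredholm index under compact perturbations, but I would avoid that machinery: the device of compressing to the finite-dimensional, $(\mathbb{I}+\BF)$-invariant subspace $S$ reduces everything to the rank--nullity theorem and the equality of the rank of a finite matrix with that of its adjoint, which is precisely what forces the index to vanish. The forward direction is comparatively routine once one observes that index $0$ is exactly the hypothesis guaranteeing an isomorphism between $\operatorname{Ker}\BT$ and $\operatorname{Ker}\BT^{\dag}$ from which $\BK$ is built.
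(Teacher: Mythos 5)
Your proof is correct, but note that the paper itself gives no proof of this proposition at all: it simply cites Corollary 2.4 in Chapter XV, Section 15.2 of Gohberg, Goldberg, and Kaashoek (2003). So your argument is not a variant of the paper's proof but a self-contained replacement for that citation, and it is a genuinely different (and more elementary) route than the one the reference encapsulates. The textbook treatment obtains this equivalence from the general stability of the Fredholm index under compact perturbations, which in turn rests on Atkinson-type machinery; you avoid that entirely. In the forward direction you exploit index $0$ directly, patching $\BT$ with a finite-rank isomorphism of $\operatorname{Ker}\BT$ onto $\operatorname{Ker}\BT^{\dag}$ (legitimate because the closed range gives $\CH=\operatorname{Ran}\BT\oplus\operatorname{Ker}\BT^{\dag}$) and checking bijectivity of $\BA=\BT+\BK$ blockwise, with the open mapping theorem supplying boundedness of the inverse. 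In the converse you reduce to $\mathbb{I}+\BF$ with $\BF=\BA^{-1}\BK$ finite rank, and the key device of compressing to the finite-dimensional invariant subspace $S=\operatorname{Ran}\BF+\operatorname{Ran}\BF^{\dag}$ is sound: $S^{\perp}\subseteq\operatorname{Ker}\BF\cap\operatorname{Ker}\BF^{\dag}$ makes $\mathbb{I}+\BF$ block diagonal, so closedness of the range and the equality $\dim\operatorname{Ker}(\mathbb{I}+\BF)=\dim\operatorname{Ker}(\mathbb{I}+\BF)^{\dag}$ both follow from rank--nullity on $S$. What each approach buys: yours keeps the chapter self-contained at the cost of a page of linear algebra and is limited to the index-$0$, finite-rank setting, which is all the paper ever uses; the cited corollary buys brevity and the stronger general statement (arbitrary index, compact perturbations), which the paper does invoke separately later (in Proposition \ref{Qprop.defYE}) via the invariance of the index under compact perturbation, so the citation cannot be dispensed with entirely elsewhere in the chapter.
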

The proof of this result can be found in chapter XV, section 15.2, p.\ 350, Corollary 2.4 in \citeAY{Gohberg:2003:BCO}.
\begin{Pro}\label{QPropFredholmBlockEquiv}
Suppose $\CH=\CH_1\oplus\CH_2$ is an orthogonal decomposition of a Hilbert space $\CH$ with $\dim \CH_1<\infty$. Let $\BT\in \CB(\CH)$ be written, with respect to this decomposition, as a $2\times 2$ block operator matrix form
\begin{align}
\BT=\begin{bmatrix}
\BT_{11} & \BT_{12}\\
\BT_{21} & \BT_{22}
\end{bmatrix}.
\end{align}
If $\BT$ is a Fredholm operator of index $n$ then $\BT_{22}\in\CB(\CH_2)$ is a Fredholm operator of index $n$. In particular, if $\BT$ is invertible then $\BT_{22}$ has index $0$.
\end{Pro}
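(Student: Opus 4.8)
The plan is to reduce $\BT$ to its block-diagonal part by a finite-rank perturbation, exploiting the hypothesis $\dim\CH_1<\infty$. Set
$$
\BT_0=\begin{bmatrix} {\bf 0} & {\bf 0}\\ {\bf 0} & \BT_{22}\end{bmatrix},
$$
which is just $\BT_{22}$ padded by zero on the finite-dimensional summand $\CH_1$. First I would show that $\BF:=\BT-\BT_0$ has finite rank. Indeed, for $\Bu=(\Bu_1,\Bu_2)\in\CH_1\oplus\CH_2$ one has $\BF\Bu=(\BT_{11}\Bu_1+\BT_{12}\Bu_2,\ \BT_{21}\Bu_1)$, whose first coordinate lies in $\CH_1$ and whose second lies in $\BT_{21}(\CH_1)$; both spaces are finite-dimensional since $\dim\CH_1<\infty$, so $\operatorname{Ran}\BF\subseteq\CH_1\oplus\BT_{21}(\CH_1)$ is finite-dimensional.

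Next I would invoke the invariance of the Fredholm property and of the index under finite-rank (indeed compact) perturbations, a classical fact available in the same reference as Proposition~\ref{QPropFredholmSumEquiv}. Since $\BT=\BT_0+\BF$ is Fredholm of index $n$ and $\BF$ is finite rank, it follows at once that $\BT_0$ is Fredholm of index $n$ as well.

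It then remains to transfer the Fredholm data from $\BT_0$ back to $\BT_{22}$, which is a direct computation with the block decomposition. Because $\BT_0$ is block-diagonal, $\operatorname{Ker}\BT_0=\CH_1\oplus\operatorname{Ker}\BT_{22}$, $\operatorname{Ker}\BT_0^{\dag}=\CH_1\oplus\operatorname{Ker}\BT_{22}^{\dag}$, and $\operatorname{Ran}\BT_0=\{{\bf 0}\}\oplus\operatorname{Ran}\BT_{22}$; in particular $\operatorname{Ran}\BT_0$ is closed iff $\operatorname{Ran}\BT_{22}$ is closed, and the two kernels of $\BT_0$ are finite-dimensional iff those of $\BT_{22}$ are (using $\dim\CH_1<\infty$). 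Hence $\BT_{22}$ is Fredholm, and
$$
\operatorname{ind}\BT_{22}=\dim\operatorname{Ker}\BT_{22}-\dim\operatorname{Ker}\BT_{22}^{\dag}=\dim\operatorname{Ker}\BT_0-\dim\operatorname{Ker}\BT_0^{\dag}=\operatorname{ind}\BT_0=n,
$$
the two contributions of $\dim\CH_1$ cancelling. This proves the main assertion. The ``in particular'' statement is then immediate: an invertible $\BT$ satisfies $\operatorname{Ker}\BT=\operatorname{Ker}\BT^{\dag}=\{{\bf 0}\}$ with closed (full) range, so it is Fredholm of index $0$, and applying the result with $n=0$ yields $\operatorname{ind}\BT_{22}=0$.

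I expect the only genuine subtlety — the hard part — to be the appeal to invariance of the index under finite-rank perturbation; the remaining steps are routine identifications of kernels and ranges. This invariance is standard and can be cited from the same source as Proposition~\ref{QPropFredholmSumEquiv}, and a self-contained treatment would merely reprove this well-known stability statement without adding any new idea.
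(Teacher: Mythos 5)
Your proof is correct and follows essentially the same route as the paper's: both subtract the same finite-rank operator (the blocks involving $\CH_1$) from $\BT$ to reduce to the block-diagonal operator with $\BT_{22}$ in the corner, invoke invariance of the Fredholm index under finite-rank perturbation, and then identify the kernels, adjoint kernels, and range of that block-diagonal operator with those of $\BT_{22}$, the two $\dim\CH_1$ contributions cancelling in the index. If anything, your explicit appeal to the general index-invariance theorem is a more accurate citation than the paper's reference to Proposition \ref{QPropFredholmSumEquiv}, which as stated covers only the index-$0$ case.
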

\begin{proof}
Suppose that $\BT$ is a Fredholm operator of index $n$ then since $\dim \CH_1<\infty$, the operator
\begin{align}
\begin{bmatrix}
\BT_{11} & \BT_{12}\\
\BT_{21} & 0
\end{bmatrix}
\end{align}
is a finite-rank operator in $\CB(\CH)$ which implies by the Proposition \ref{QPropFredholmSumEquiv} that
\begin{align}
\begin{bmatrix}
0 & 0\\
0 & \BT_{22}
\end{bmatrix}
= \BT-
\begin{bmatrix}
\BT_{11} & \BT_{12}\\
\BT_{21} & 0
\end{bmatrix}
\end{align}
is a Fredholm operator in $\CB(\CH)$ of index $n$. Thus,
\begin{align}
\CH_1\oplus\operatorname{Ker}\BT_{22},\;\;\CH_1\oplus\operatorname{Ker}\BT_{22}^{\dag},
\end{align}
are finite-dimensional, $\operatorname{Ran}\BT_{22}$ is closed, and $$n=\dim(\CH_1\oplus\operatorname{Ker}\BT_{22})-\dim(\CH_1\oplus\operatorname{Ker}\BT_{22}^{\dag})=\dim(\operatorname{Ker}\BT_{22})-\dim(\operatorname{Ker}\BT_{22}^{\dag}).$$ Therefore, $\BT_{22}$ is a Fredholm operator of index $n$. In particular, if $\BT$ is invertible then it is a Fredholm operator of index $0$ and so is $\BT_{22}$. This completes the proof.
\end{proof}
\section{Formulation of the problem for two-component composites}\label{QSecFormProb}
\setcounter{equation}{0}
The setting from Chapter 2 of \citeAY{Graeme:2016:ETC} is a Hilbert space $\CH$ which has an inner product $(\BP_1,\BP_2)$ defined for all $\BP_1,\BP_2\in \CH$ having the usual properties that
\begin{align}\label{QDefInnerProduct}
(\BP_1,\BP_2)=\overline{(\BP_2,\BP_1)},\;\;(\BP_1,\BP_1)>0\;\;\textnormal{for all } \BP_1\not=0,
\end{align}
with the convention that it is linear in the second component and antilinear in the first.
 This Hilbert space is assumed to have the decomposition
\begin{align}\label{QZ2SubspColl}
\CH = \CU\oplus\CE\oplus\CJ=\CP_1\oplus\CP_2,
\end{align}
where the subspaces $\CU$, $\CE$, and $\CJ$ are mutually orthogonal with respect to this inner product as are the subspaces $\CP_1$, $\CP_2$. Moreover, we assume $\CU$ is a finite-dimensional subspace and that $\CP_1$ and $\CP_2$ are nonzero subspaces. Under these assumptions, the decomposition in (\ref{QZ2SubspColl}) is called an orthogonal $Z(2)$ subspace collection%
\index{subspace collections!orthogonal}
(see Chapter 7 of \citeAY{Graeme:2016:ETC}).

We denote by $\BGG_0$, $\BGG_1$ and $\BGG_2$%
\index{projection operators!gamma@$\BGG_0$, $\BGG_1$, and $\BGG_2$}
the orthogonal projections on $\CU$, $\CE$, and $\CJ$, respectively, and denote by $\BGL_1$ and $\BGL_2$ the orthogonal projections onto $\CP_1$ and $\CP_2$, respectively.
We want to emphasize that we don't assume that the operators $\BGL_a$ and $\BGG_i$ commute. In this abstract setting, the Hilbert space $\CH$ can be infinite-dimensional and as such it may not be clear that the subspaces introduced above must all be closed subspaces so that their corresponding orthogonal projections on each subspace exist. These facts though follow immediately from the following classical proposition which we prove here so that this chapter is self-contained.
\begin{Pro}
Let $\CH$ be a Hilbert space with inner product (\ref{QDefInnerProduct}). If $\CM_1$ and $\CM_2$ are subspaces of $\CH$ which are mutually orthogonal and $\CH=\CM_1\oplus\CM_2$ then $\CM_1$ and $\CM_2$ are closed sets in the norm topology on $\CH$ and, for $i,j=1,2$ with $i\not=j$,  the projection $\BQ_i:\CH\rightarrow \CH$ onto $\CM_i$ along $\CM_j$, which is uniquely defined by $\BQ_i\BP=\BP_i$ if $\BP=\BP_1+\BP_2$ for some vectors $\BP_1\in \CM_1$ and $\BP_2\in \CM_2$, is an orthogonal projection.
\end{Pro}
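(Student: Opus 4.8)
The plan is to write the two projections explicitly and show that they are bounded self-adjoint idempotents; closedness of $\CM_1$ and $\CM_2$ will then fall out as a byproduct. The starting point is to check that the decomposition $\BP=\BP_1+\BP_2$, with $\BP_1\in\CM_1$ and $\BP_2\in\CM_2$, is \emph{unique}, so that the prescriptions $\BQ_1\BP:=\BP_1$ and $\BQ_2\BP:=\BP_2$ are genuinely single-valued. If $\BP_1+\BP_2=\BP_1'+\BP_2'$ then $\BP_1-\BP_1'=\BP_2'-\BP_2$ lies in $\CM_1\cap\CM_2$; because $\CM_1\perp\CM_2$ this vector is orthogonal to itself, hence zero by the positivity in (\ref{QDefInnerProduct}). (Equivalently, orthogonality forces $\CM_1\cap\CM_2=\{0\}$, which is what upgrades the algebraic sum to a genuine direct sum.) With the maps well defined, their linearity together with the relations $\BQ_i^2=\BQ_i$ and $\BQ_1+\BQ_2=\mathbb{I}$ are immediate from the definition.

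The key step, and the one I expect to be the real obstacle, is \emph{boundedness}: an algebraic projection onto a linear manifold that is not a priori closed need not be continuous. This is exactly where orthogonality earns its keep. For $\BP=\BP_1+\BP_2$ the cross terms $(\BP_1,\BP_2)$ and $(\BP_2,\BP_1)$ vanish, so $(\BP,\BP)=(\BP_1,\BP_1)+(\BP_2,\BP_2)$, i.e. the Pythagorean identity $\|\BP\|^2=\|\BP_1\|^2+\|\BP_2\|^2$ holds. Hence $\|\BQ_i\BP\|\le\|\BP\|$ for every $\BP\in\CH$, which gives $\BQ_i\in\CB(\CH)$ with $\|\BQ_i\|\le 1$. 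Everything after this point is routine.

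It then remains to verify self-adjointness and to read off closedness. For $\BP=\BP_1+\BP_2$ and $\BR=\BR_1+\BR_2$, orthogonality again annihilates the mixed terms and yields $(\BQ_1\BP,\BR)=(\BP_1,\BR_1)=(\BP,\BQ_1\BR)$, so that $\BQ_1=\BQ_1^{\dag}$, and symmetrically $\BQ_2=\BQ_2^{\dag}$. A bounded self-adjoint idempotent is by definition an orthogonal projection, which establishes the final assertion. Closedness is now automatic: $\BP\in\CM_1$ if and only if $\BQ_2\BP=0$, so $\CM_1=\operatorname{Ker}\BQ_2$ is closed as the kernel of the continuous operator $\BQ_2$ (equivalently $\CM_1=\operatorname{Ran}\BQ_1$), and likewise $\CM_2=\operatorname{Ker}\BQ_1$ is closed.
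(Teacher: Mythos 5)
Your proof is correct and follows essentially the same route as the paper's: define $\BQ_1,\BQ_2$ from the direct-sum decomposition, use orthogonality to get boundedness, verify self-adjointness by killing the cross terms, and deduce closedness of $\CM_i$ as the kernel of the continuous complementary projection. If anything, your write-up is slightly cleaner: you use the correct Pythagorean identity $\|\BP\|^2=\|\BP_1\|^2+\|\BP_2\|^2$ where the paper writes the (incorrect as stated) chain $\|\BP_i\|\le\|\BP_1\|+\|\BP_2\|=\|\BP\|$, and you correctly identify $\CM_1=\operatorname{Ker}\BQ_2$ rather than the paper's $\BQ_i^{-1}(\{{\bf 0}\})=\CM_i$.
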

\begin{proof}
Suppose $\CM_1$ and $\CM_2$ are subspaces of $\CH$ which are mutually orthogonal and $\CH=\CM_1\oplus\CM_2$. Then every $\BP\in\CH$ can be written as $\BP=\BP_1+\BP_2$ for some unique $\BP_1\in \CM_1$, $\BP_2\in\CM_2$ and there exists unique linear operators $\BQ_i:\CH\rightarrow \CH$ defined by $\BQ_i\BP=\BP_i$, $i=1,2$. In particular, this implies that they are projections, i.e., $\BQ_i^2=\BQ_i$, $i=1,2$, with the property that they sum to the identity operator, i.e., $\BQ_1+\BQ_2=I_{\CH}$. It now follows from the fact that the spaces $\CM_1,\CM_2$ are mutually orthogonal that for any $\BP\in\CH$ with $\BP=\BP_1+\BP_2$ for some $\BP_1\in \CM_1$, $\BP_2\in\CM_2$ we have  $||\BQ_i\BP||=||\BP_i||\leq||\BP_1||+||\BP_2||=||\BP||$ and $||\BQ_i\BP_i||=||\BP_i||$ for $i=1,2$. Thus, since $\CH=\CM_1\oplus\CM_1$, this implies that in the operator norm $||\BQ_i||=1$ or $\BQ_i=0$ and hence $\BQ_i\in\CB(\CH)$ for $i=1,2$. Thus, since $\CB(\CH)$ is the space of all continuous linear operator from $\CH$ into $\CH$ in the norm topology on $\CH$ then this implies $\BQ_i^{-1}(\{{\bf 0}\})=\CM_i$ is a closed set. Finally, since the spaces $\CM_1$ and $\CM_2$ are mutually orthogonal then for any $\BP_i, \BR_i\in\CM_i$, $i=1,2$ we have $\left(\BQ_i(\BP_1+\BP_2),\BR_1+\BR_2\right)=\left(\BP_i,\BR_i\right)=\left(\BP_1+\BP_2,\BQ_i(\BR_1+\BR_2)\right)$
which implies that $\BQ_i^{\dag}=\BQ_i$, i.e., the projection operator $\BQ_i$ is self-adjoint, that is, it is an orthogonal projection.  This completes the proof.
\end{proof}

We now define the bounded linear operator-valued function on $\CH$ by
\begin{align}\label{QLFunction}
\BL=\BL(L_1,L_2)=L_1\BGL_1+L_2\BGL_2.
\end{align}
It is a function of the complex variables $L_1, L_2$, and $\BL:\bbC^2\rightarrow \CB(\CH)$ is an analytic $\CB(\CH)$-valued function in the sense of the Definition \ref{QDef.Analyticity}.

\begin{Rem}\label{Qrem.conduct1}
Defining this functional framework for the conductivity equation%
\index{conductivity equations}
of a two-component periodic composite defined by a unit cell $\CD$, which is an open simply connected set with a Lipschitz continuous boundary, and composed of isotropic materials characterized by their complex conductivity $\sigma_1 $ and $\sigma_2$ leads to:
\begin{itemize}
\item $\CH=\BL^2(\CD)=[L^2(\CD)]^d$ the space of square-integrable, $d$-dimensional, vector-field-valued functions on the unit cells
$\CD$ endowed with the inner product
$$
(\BE,\BE^{'})=\frac{1}{|\CD|}\int_{\CD}\BE^{\top}\overline{\BE^{'}} \, \mathrm{d} {\bf x},
 \ \forall \BE, \BE^{'} \in  \M_{d,1}(\bbC),$$
where $|\CD|$ denotes the volume of $\CD$,
\item $\CU$ is the $d$-dimensional space defined by $$\CU=\left\{\BU\in\CH\mid \BU\equiv C, \mbox{ for some  } C\in \M_{d,1}(\bbC) \right\},$$
\item $\CE=\left\{ \BE \in  \CH \mid \Curlop \BE=0  \mbox{ in } \CD \mbox{ and }  \int_{\CD}\BE \, \mathrm{d} {\bf x}=0  \mbox{ (zero average condition)}\right\},$
\item $\CJ=\left\{ \BJ \in  \CH \mid \Divop \BJ =0 \mbox{ in } \CD \mbox{ and }  \int_{\CD}\BJ \, \mathrm{d} {\bf x}=0\right\}.$
\end{itemize}
For a proof of the orthogonal decomposition $\BL^{2}(\CD)= \CU\oplus \CE\oplus \CJ$, we refer to
Chapter 12 of \citeAPY{Milton:2002:TOC}.
Although the definition (\ref{QLFunction}) of the operator $\BL$ may seem abstract, one sees it is natural in the theory of composites (see \citeAY{Milton:2002:TOC}) where $\BL$ is the conductivity tensor
\begin{align}
\BL(\sigma_1, \sigma_2) = \sigma_1\chi_1\mathbb{I}+\sigma_2\chi_2\mathbb{I} \, .
\end{align}
In this context, $\chi_i$ is the indicator function%
\index{indicator function}
of the domain occupied by phase $i$ and the scalars $L_1$ and $L_2$ in (\ref{QLFunction}) are respectively the complex conductivities $\sigma_1$ and $\sigma_2$. Thus, $\BGL_1=\chi_1\mathbb{I}$ and $\BGL_2=\chi_2\mathbb{I}$ represent respectively the projections onto the spaces $\CP_1$ and $\CP_2$ of $~\BL^2(\CD)$ fields which are nonzero only inside phase $1$ or $2$. Finally, it should be emphasized that the restriction of $~\BL$ on the subspace $\CU\oplus \CE$ represents the constitutive law of the composite which links an electric field in $\CU\oplus \CE$ to a current density in $\CU\oplus \CJ$.
\end{Rem}

Following Chapter 7 of \citeAY{Graeme:2016:ETC}, associated to this $Z(2)$ subspace collection (\ref{QZ2SubspColl}) is a linear operator-valued function $\BLE(L_1,L_2)$ (i.e., the associated $Z$-operator) acting on the subspace $\CU$. To obtain this function we begin by solving the following problem: for a given $\Be\in\CU$, find a unique triplet of vectors $\Bj\in\CU$, $\BE\in\CE$, and $\BJ\in\CJ$ that satisfy
\begin{align}\label{QZProblem}
\Bj+\BJ = \BL(\Be+\BE),
\end{align}
also known as the $Z$-problem%
\index{Zp@$Z$-problem}
(see Chapter 7 of \citeAY{Graeme:2016:ETC}).
The associated operator $\BLE$, by definition, governs the linear relation
\begin{align}\label{QZOperator}
\Bj=\BLE\Be.
\end{align}
To obtain a representation of $\BLE$ using $\BL$, we follow the approach of Chapter 12 of \citeAPY{Milton:2002:TOC}  which consists of expressing
$\BL$ as a $3\times 3$ block operator matrix
\begin{align}
\BL=\begin{bmatrix}
\BL_{00} & \BL_{01} & \BL_{02} \\
\BL_{10} & \BL_{11} & \BL_{12}  \\
\BL_{20} & \BL_{21} & \BL_{22}
\end{bmatrix}
\end{align}
with respect to the decomposition $\CH = \CU\oplus\CE\oplus\CJ$.
Solving the $Z$-problem (\ref{QZProblem}) for a given $\Be\in \CU$ is then equivalent to proving that the system
\begin{equation}\label{Qeq.systLstar}
\left\{ \begin{array}{ll}
\BL_{00} \Be+  \BL_{01} \BE=\Bj,  \\[5pt]
\BL_{10} \Be+  \BL_{11} \BE=0 ,\\[5pt]
\BL_{20} \Be+ \BL_{21} \BE= \BJ ,
\end{array} \right.
\end{equation}
admits a unique solution $(\Bj,\BE,\BJ) \in \CU \times \CE \times \CJ $. Under the assumption that the block operator $\BL_{11}$ (which represent the restriction of the operator $\BGG_1\BL \BGG_1$ on the subspace $\CE$) is invertible, the solution of the system (\ref{Qeq.systLstar}) is unique and given by
\begin{equation}\label{Qeq.systLstar1}
\left\{ \begin{array}{ll}
\Bj= (\BL_{00}- \BL_{01}  \BL_{11}^{-1} \BL_{10} )\, \Be , \\[5pt]
 \BE= - \BL_{11}^{-1}\,\BL_{10} \,\Be ,\\[5pt]
 \BJ =\BL_{20}\, \Be+ \BL_{21} \,\BE,
\end{array} \right.
\end{equation}
which defines the $Z$-operator $\BLE$ via a Schur complement%
\index{Schur complement}%
\index{effective!operator!L@$\BL_*$ formula}
by:
\begin{equation}\label{QZOperatorFormulaschur}
\BLE=(\BL_{00}- \BL_{01}  \BL_{11}^{-1} \BL_{10} )=  \BGG_0\BL \BGG_0- \BGG_0\BL \BGG_1(\BGG_1\BL \BGG_1)^{-1} \BGG_1\BL \BGG_0,
\end{equation}
where the second equality in the last relation has to be restricted to the subspace $\CU$ of $\CH$.

Another representation formula of the operator $\BLE$:%
\index{effective!operator!L@$\BL_*$ formula}
\begin{align}\label{QZOperatorFormula}
\BLE=\BGG_0[(\BGG_0+\BGG_2)\BL^{-1}(\BGG_0+\BGG_2)]^{-1}\BGG_0,
\end{align}
is proved in Section 12.8 of \citeAPY{Milton:2002:TOC}, equation (12.59) (see also Section 7.4, equation (7.68) of \citeAY{Graeme:2016:ETC}) under the assumption that the inverse of $(\BGG_0+\BGG_2)\BL^{-1}(\BGG_0+\BGG_2)$ exists on the subspace $\CU\oplus\CJ$ (which requires in particular that $\BL$ is invertible on $\CH$).
\begin{Rem}\label{QRem.pos}
For instance, if $\Imag\BL\ge\alpha\mathbb{I}$ or $\Real\BL\ge\alpha\mathbb{I}$ for some $\alpha>0$ then  the inverse of $\BGG_1\BL \BGG_1$ on $\CE$ and the inverse of $(\BGG_0+\BGG_2)\BL^{-1}(\BGG_0+\BGG_2)$ on $\CU\oplus\CJ$ exists and  thus the $Z-$ problem (\ref{QZProblem}) is well-defined and both representation formulas (\ref{QZOperatorFormulaschur}) and  (\ref{QZOperatorFormula} ) of $\BLE$ hold.
\end{Rem}
Our next proposition shows that these two representations of $\BLE$ are both well-defined together or neither is when $\BL^{-1}$ exists.
\begin{Pro}\label{Qprop.equivrepresLE}
If $\BL(L_1,L_2)$ is invertible then the operator $\BA:=\BGG_1\BL(L_1,L_2)\BGG_1:\CE\rightarrow\CE$ is a Fredholm operator of index $n$ if and only if the operator $\BB:=(\BGG_0+\BGG_2)\BL(L_1,L_2)^{-1}(\BGG_0+\BGG_2):\CU\oplus\CJ\rightarrow\CU\oplus\CJ$ is a Fredholm operator of index $n$. Furthermore, the operator $\BA$ is invertible if and only if $\BB$ is invertible, and in that case we have
\begin{align}
\BA^{-1}&=\BGG_1\BL(L_1,L_2)^{-1}\BGG_1-\BGG_1\BL(L_1,L_2)^{-1}(\BGG_0+\BGG_2)\BB^{-1}(\BGG_0+\BGG_2)\BL(L_1,L_2)^{-1}\BGG_1,\\
\BB^{-1}&=(\BGG_0+\BGG_2)\BL(L_1,L_2)(\BGG_0+\BGG_2)-(\BGG_0+\BGG_2)\BL(L_1,L_2)\BGG_1\BA^{-1}\BGG_1\BL(L_1,L_2)(\BGG_0+\BGG_2).
\end{align}
\end{Pro}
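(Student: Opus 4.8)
The plan is to treat the decomposition $\CH=\CE\oplus\CF$ with $\CF:=\CU\oplus\CJ$ as a two-block splitting, so that $\BGG_1$ projects onto $\CE$ and $\BGG_0+\BGG_2$ onto $\CF$. With respect to this splitting I would write
\begin{align}
\BL=\begin{bmatrix}\BA & \BX\\ \BY & \BD\end{bmatrix},\qquad \BL^{-1}=\begin{bmatrix}\BA' & \BX'\\ \BY' & \BB\end{bmatrix},
\end{align}
so that by construction the $(1,1)$-block of $\BL$ is exactly $\BA=\BGG_1\BL\BGG_1$ and the $(2,2)$-block of $\BL^{-1}$ is exactly $\BB=(\BGG_0+\BGG_2)\BL^{-1}(\BGG_0+\BGG_2)$. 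The identities $\BL\BL^{-1}=\BL^{-1}\BL=I$ then give four block relations; the two I would single out are $\BA\BX'+\BX\BB=0$ and $\BY\BX'+\BD\BB=I_{\CF}$. I would note in passing that the finite-dimensionality of $\CU$ plays no role in this proposition: the statement is really one about an arbitrary orthogonal two-block splitting of an invertible operator, so it holds verbatim with $\CF$ replaced by any complement of $\CE$.

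For the Fredholm/index equivalence I would not compare $\BA$ and $\BB$ directly but rather their identity-padded versions $\wt{\BA}:=\BA\oplus I_{\CF}$ and $\wt{\BB}:=I_{\CE}\oplus\BB$ on $\CH$, which are Fredholm of index $n$ exactly when $\BA$, respectively $\BB$, is. The heart of the argument is to exhibit $\wt{\BA}$ and $\wt{\BB}$ as equivalent through invertible operators. Using $\BX\BB=-\BA\BX'$ and $\BD\BB=I_{\CF}-\BY\BX'$ I would compute
\begin{align}
\BL\,\wt{\BB}=\begin{bmatrix}\BA & -\BA\BX'\\ \BY & I_{\CF}-\BY\BX'\end{bmatrix}=\wt{\BA}\begin{bmatrix}I_{\CE} & 0\\ \BY & I_{\CF}\end{bmatrix}\begin{bmatrix}I_{\CE} & -\BX'\\ 0 & I_{\CF}\end{bmatrix},
\end{align}
whence $\wt{\BA}=\BL\,\wt{\BB}\,\BV$ with $\BV$ the inverse of the product of the two unipotent triangular factors above, hence invertible. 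Since composition with invertible operators and direct sum with an identity both preserve Fredholmness and the index, this immediately yields that $\BA$ is Fredholm of index $n$ if and only if $\BB$ is. The main obstacle is precisely discovering this factorization: the clever points are padding with identities so that $\BA$ and $\BB$ live on the same space, choosing $\BL$ itself as the left multiplier, and reading the right multiplier off the inverse relations. Everything after that is formal.

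For the invertibility statement and the two explicit formulas I would fall back on the Schur complement. Assuming $\BA$ invertible, the factorization
\begin{align}
\BL=\begin{bmatrix}I_{\CE} & 0\\ \BY\BA^{-1} & I_{\CF}\end{bmatrix}\begin{bmatrix}\BA & 0\\ 0 & \BD-\BY\BA^{-1}\BX\end{bmatrix}\begin{bmatrix}I_{\CE} & \BA^{-1}\BX\\ 0 & I_{\CF}\end{bmatrix}
\end{align}
shows that invertibility of $\BL$ forces the Schur complement $\BD-\BY\BA^{-1}\BX$ to be invertible, and since it is the inverse of the $(2,2)$-block of $\BL^{-1}$ one gets that $\BB$ is invertible with $\BB^{-1}=\BD-\BY\BA^{-1}\BX$; rewriting $\BD,\BX,\BY$ in terms of $\BGG_0,\BGG_1,\BGG_2$ reproduces the stated formula for $\BB^{-1}$. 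Running the identical argument on $\BL^{-1}$, whose $(2,2)$-block is $\BB$, gives the converse implication together with $\BA^{-1}=\BA'-\BX'\BB^{-1}\BY'$, i.e.\ the stated formula for $\BA^{-1}$. Alternatively the invertibility equivalence could be deduced from the index-$0$ case of the first part combined with the Remark following the definition of the index, but the Schur factorization has the advantage of producing both formulas at once, and this part is routine bookkeeping once the block relations are in hand.
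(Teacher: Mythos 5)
Your proof is correct, and it diverges from the paper's at exactly one point: what happens after the block decomposition is set up. The paper writes $\BL$ and $\BL^{-1}$ in $2\times 2$ block form with respect to the same splitting $\CH=\CE\oplus(\CU\oplus\CJ)$ you use, observes that $\BA$ and $\BB$ are then \emph{matricially coupled} operators, and at that point simply cites Corollary 4.3 in Section III.4 of Gohberg, Goldberg, and Kaashoek's \emph{Classes of Linear Operators, Vol.\ I} to obtain everything at once: the index equivalence, the simultaneous invertibility, and both inverse formulas. You instead prove the needed instance of that theorem from scratch. Your factorization $\BL\,\wt{\BB}=\wt{\BA}\,\BU_1\BU_2$, with $\BU_1,\BU_2$ the two unipotent triangular factors, is precisely the ``equivalence after extension'' of $\BA$ and $\BB$ that underlies the cited corollary; the block identities you invoke ($\BX\BB=-\BA\BX'$ and $\BD\BB=I_{\CF}-\BY\BX'$) and both factorizations check out, and the reduction from $\wt{\BA},\wt{\BB}$ back to $\BA,\BB$ is sound since padding by an identity and composing with invertible operators preserve Fredholmness and index. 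Your two Schur-complement computations, applied once to $\BL$ and once to $\BL^{-1}$, then reproduce exactly the two stated inverse formulas. What each approach buys: the paper's proof is three lines long and leans on known operator theory, which is efficient but leaves the mechanism hidden in a reference; yours is self-contained, exhibits the mechanism explicitly, and makes transparent your (correct) side remark that the finite-dimensionality of $\CU$ plays no role here---the proposition is really a statement about an arbitrary orthogonal two-block splitting of an invertible operator, a fact one cannot see from the citation alone.
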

\begin{proof}
Suppose $\BL(L_1,L_2)$ is invertible. Then with respect to the orthogonal decomposition of the Hilbert space $\CH=\CE\oplus(\CU\oplus\CJ)$, with the corresponding orthogonal projections $\BGG_1$ and $\BGG_0+\BGG_2$, the operator $\BL(L_1,L_2)$ and its inverse $\BL(L_1,L_2)^{-1}$ can be written as the $2\times 2$ block operator matrices
\begin{align}
\BL(L_1,L_2)=\begin{bmatrix}
\BA & \BA_{12}\\
\BA_{21} & \BA_{22}
\end{bmatrix},\;\;
\BL(L_1,L_2)^{-1}=\begin{bmatrix}
\BB_{11} & \BB_{12}\\
\BB_{21} & \BB
\end{bmatrix}.
\end{align}
This means that $\BA$ and $\BB$ are matricially coupled operators%
\index{operators!matricially coupled}
and therefore the proof of this proposition now follows immediately from Corollary 4.3, pp.\ 46--47, section III.4 of \citeAY{Gohberg:1990:CL1}.
\end{proof}
\begin{Rem}\label{QRem.inv}
The invertibility condition of $\BL(L_1,L_2)$ is straightforward to check.
Indeed, we have
\begin{gather*}
\BL(L_1,L_2) \mbox{ is invertible if and only if } L_1\neq 0 \mbox{ and } L_2\neq 0.
\end{gather*}
Moreover, the operator $\BL(L_1,L_2)^{-1}$  is given by the following formula:
$$
\BL(L_1,L_2)^{-1}=\frac{1}{L_1} \Lambda_1+\frac{1}{L_2} \Lambda_2.
$$
Nevertheless, we emphasize that the invertibility of $~\BL$ on $\CH$ does not imply in general the invertibility of $~\BGG_1 \BL \BGG_1$ on the subspace $\CE$.
\end{Rem}

It follows from the formulas (\ref{QZOperatorFormulaschur}) or  (\ref{QZOperatorFormula}) that the analytic $\CB(\CU)$-valued function $\BLE(L_1,L_2)$ (see Definition \ref{QDef.Analyticity}) satisfies the following homogeneity, normalization, and Herglotz properties:

{\bf The homogeneity property:}%
\index{homogeneity property}
\begin{align}\label{QHomProperty}
\BLE(L_1,L_2)=\frac{1}{c}\BLE(cL_1,cL_2),
\end{align}
for all choices of constants $c\not=0$.

{\bf The normalization property:}%
\index{normalization property}
\begin{align}\label{QNormProperty}
\BLE(1,1)=\mathbb{I}_{\CH}.
\end{align}

{\bf The Herglotz property:}%
\index{Herglotz property}
\begin{align}\label{QHerglotzProperty}
\Imag\BLE(L_1,L_2)>0,\;\;\textnormal{when } \Imag(L_1)>0\textnormal{ and }\Imag(L_2)>0.
\end{align}
The Herglotz property of $\BLE$ may not be obvious and so we provide two proofs. The first one depends on the representation formula (\ref{QZOperatorFormula}) of $\BLE$. The second one is independent of the representation formulas  (\ref{QZOperatorFormulaschur}) or  (\ref{QZOperatorFormula}) (which may not exist) and just assumes that $\BLE$ is well-defined by the $Z$-problem  (\ref{QZProblem}). Nevertheless, as it was mentioned in Remark \ref{QRem.pos}, this coercivity assumption%
\index{coercivity property}
has the convenience to be a sufficient condition of the well-posedness of the $Z$-problem and to justify the existence of the both representation formulas (\ref{QZOperatorFormulaschur}) and  (\ref{QZOperatorFormula}) of $\BLE$.
\begin{proof}
By the definition (\ref{QLFunction}) of the operator $\BL$, it is straightforward that the conditions $\Imag(L_1)>0$ and $\Imag(L_2)>0$ imply that $\Imag(\BL)\ge\beta  \mathbb{I}$ with $\beta=\min(\Imag(L_1), \Imag(L_2))>0$. Thus, by the Lax--Milgram theorem%
\index{Lax--Milgram theorem}
one deduces immediately that $\BL$ is invertible on $\CH$ and by virtue of the identities:
\begin{align}\label{QFundImProps}
\Imag(\BT^{-1})= - (\BT^{-1})^{\dag} \Imag(\BT) \BT^{-1},\;\;\Imag(\BM^{\dag}\BA\BM)=\BM^{\dag}\Imag(\BA)\BM,
\end{align}
which hold for every invertible operator $\BT$ of $\CB(H)$ and every $\BA,\BM\in \CB(H)$,
we get immediately that $\Imag(\BGG_0+\BGG_2)\BL^{-1}(\BGG_0+\BGG_2)$ is a uniformly negative operator on the subspace $\CU\oplus\CJ$ and thus (by  the Lax--Milgram theorem) $(\BGG_0+\BGG_2)\BL^{-1}(\BGG_0+\BGG_2)$ is invertible on $\CU\oplus \CJ$ and hence the operator $\BLE$ is well-defined by (\ref{QZOperatorFormula}) if $\Imag(L_1)>0$ and $\Imag(L_2)>0$. Moreover, by using again (\ref{QFundImProps}), one concludes that $\Imag\BLE=\Imag\BGG_0[(\BGG_0+\BGG_2)\BL^{-1}(\BGG_0+\BGG_2)]^{-1}\BGG_0$ is coercive on $\CU$ (in particular this implies that $\BLE$ is invertible).

The second version of this proof is based on the definition (\ref{QZProblem}) and (\ref{QZOperator}) of the $Z$-problem, and follows the treatment given at the end of
Section 2.6 of \citeAY{Graeme:2016:ETC} (see also the end of Section 12.10 in \citeAY{Milton:2002:TOC}).  Assume that the $\BLE(L_1,L_2)$ is well-defined by the $Z$-problem. Let $\Be\in \CU$ then there exists a $\BE\in \CE$, $\BJ\in\CJ$, and $\Bj\in\CU$ such that $\BLE(L_1,L_2)(\Be+\BE)=\Bj+\BJ$. Hence we obtain
\beqa
( (\Be+\BE), \operatorname{Im} \BL (\Be+\BE))&=&\operatorname{Im} \big(\Be+\BE, \BL(\Be+\BE)\big) \nonumber \\
&=& \operatorname{Im} (  \Be+\BE, \Bj+\BJ)\nonumber \\
&=& \operatorname{Im} (\Be, \BLE \Be)\nonumber\\
&=& (\Be, \operatorname{Im} \BLE \Be).
\eeqa{QLposit}
Thus, the positive definiteness of  $\operatorname{Im}\BL(L_1,L_2)$ under this hypothesis $\Imag(L_1)>0$ and $\Imag(L_2)>0$ implies immediately the coercivity of $\operatorname{Im}[\BLE(L_1,L_2)]$ as $\CU$ is finite-dimensional, in particular, it also implies that $\BLE(L_1,L_2)$ is invertible.
\end{proof}
\section{Field equation recursion method for two-component composites}\label{QSecFieldRecursion}
\setcounter{equation}{0}
\subsection{The base case}\label{QSubSecBaseCase}
In the field equation recursion method,%
\index{field equation recursion method}
the first step is to generate, from the orthogonal $Z(2)$ subspace collection (\ref{QZ2SubspColl}), an orthogonal $Y(2)$ subspace collection (see Chapter 9 of \citeAY{Graeme:2016:ETC}) namely, the Hilbert space
\begin{align}\label{QYSubspaceCollection}
\CK =\CE\oplus\CJ=\CV\oplus\CP_1^{(1)}\oplus\CP_2^{(1)},
\end{align}
where $\CK$ is the orthogonal complement of $\CU$ in $\CH$, i.e.,
\begin{align}
\CK = \CH\ominus\CU.
\end{align}
Then one generates another orthogonal $Z(2)$ subspace collection, namely, the Hilbert space
\begin{align}\label{QZ1SubspaceCollection}
\CH^{(1)}=\CU^{(1)}\oplus\CE^{(1)}\oplus\CJ^{(1)}=\CP_1^{(1)}\oplus\CP_2^{(1)},
\end{align}
with a suitable choice of the spaces, satisfying the inclusion relations
\begin{align}\label{QSubspaceInclusionRelations}
\CE^{(1)}\subseteq \CE,\;\;\CJ^{(1)}\subseteq \CJ,\;\;\CP_1^{(1)}\subseteq \CP_1,\;\;\CP_2^{(1)}\subseteq \CP_2.
\end{align}

We have already defined $\CK =\CE\oplus\CJ$, so our goal is to define the spaces $\CP_1^{(1)}$, $\CP_2^{(1)}$, $\CV$, $\CE_1^{(1)}$, $\CE_2^{(1)}$, $\CJ_1^{(1)}$, $\CJ_2^{(1)}$, $\CH^{(1)}$, and $\CU^{(1)}$. The spaces $\CP_1^{(1)}$ and $\CP_2^{(1)}$ are defined as
\begin{align}
\CP_1^{(1)}=\CP_1\cap \CK,\;\;\CP_2^{(1)}=\CP_2\cap \CK.
\end{align}
$\CV$ is the orthogonal complement of $\CH^{(1)}=\CP_1^{(1)}\oplus\CP_2^{(1)}$ in $\CK$, i.e.,
\begin{align}
\CV=\CK\ominus\CH^{(1)},
\end{align}
the spaces $\CE^{(1)}$ and $\CJ^{(1)}$ are defined as
\begin{align}
\CE^{(1)}=\CE\cap \CH^{(1)},\;\;\CJ^{(1)}=\CJ\cap \CH^{(1)},
\end{align}
and finally $\CU^{(1)}$ is defined as the orthogonal complement of $\CE^{(1)}\oplus\CJ^{(1)}$ in the space $\CH^{(1)}$, i.e.,
\begin{align}\label{Qeq.defCU1}
\CU^{(1)}=\CH^{(1)}\ominus[\CE^{(1)}\oplus\CJ^{(1)}].
\end{align}

The following projections $\BGG_1$, $\BGG_2$, $\BGP_1$, $\BGP_2$, $\BGL_1^{(1)}$, and $\BGL_2^{(1)}$ play a key role and are respectively defined as the orthogonal projections onto $\CE$, $\CJ$, $\CV$, $\CH^{(1)}$, $\CP_1^{(1)}$, and $\CP_2^{(1)}$.

\begin{Rem}\label{QRemLOnSubspaceRestr}
By the subspace inclusion (\ref{QSubspaceInclusionRelations}) we have
\begin{gather}
\BL=L_1\BGL_1+L_2\BGL_2 = L_1\BGL_1^{(1)}+L_2\BGL_2^{(1)}\;\;\textnormal{on } \CH^{(1)}=\CP_1^{(1)}\oplus \CP_2^{(1)}\\
\BL=L_2\BGL_2^{(1)}\;\;\textnormal{on }
\CP_2^{(1)},\;\;\BL=L_2\BGL_1^{(1)}\;\;\textnormal{on } \CP_1^{(1)}.
\end{gather}
\end{Rem}

\begin{Rem}\label{Qrem.conduct2}
The interpretation of the functional spaces which appear in the two subspace collections  $Y(2)$ and $Z(2)$ in (\ref{QYSubspaceCollection}) and (\ref{QZ1SubspaceCollection}) in the case of the conductivity equation for a two-component composite are:
\begin{itemize}
\item  $\CP_i^{(1)}=\CP_i\cap \CK$ for $i=1,2$ is the subspace of $~\BL^{2}(\CD)$ of all fields whose support is included in the phase $i$ and have zero average in the unit cell $\CD$,
\item $\CH^{(1)}=\CP_1^{(1)}\oplus \CP_2^{(1)}$ is the subspace of $~\BL^{2}(\CD)$ of all fields which have zero average in each phase,
\item $\CV$ is the subspace of $~\BL^{2}(\CD)$ of all fields which are constant in each phase and have zero average in the unit cell $\CD$,
\item $\CE^{(1)}=\left\{ \BE \in  \CH^{(1)} \mid \Curlop \BE=0  \mbox{ in } \CD \right\},$ $\CJ^{(1)}=\left\{ \BJ \in  \CH^{(1)} \mid \Divop \BJ =0 \mbox{ in } \CD  \right\}.$
\end{itemize}
The definition (\ref{Qeq.defCU1}) of the space $\CU^{(1)}$ is more complicated to interpret, we will see in the Corollary \ref{QCorFinDimSubspaces} that it can also be defined as
$$
 \CU^{(1)}=(\BGG_1 \CV \oplus \BGG_2 \CV) \ominus \CV,
$$
in other words, as the orthogonal complement of $\CV$ in $\BGG_1 \CV \oplus \BGG_2 \CV$.
\end{Rem}

The $Y$-operator of the associated $Y$-problem in the $Y(2)$ subspace collection (\ref{QYSubspaceCollection}) will be the $Y$-tensor function $\BYE(L_1,L_2)$ and the effective tensor function of the $Z(2)$-subspace collection (\ref{QZ1SubspaceCollection}) will be $\BLE^{(1)}(L_1,L_2)$, i.e., the $Z$-operator%
\index{Zo@$Z$-operator}
of the associated $Z$-problem%
\index{Zp@$Z$-problem}
in the $Z(2)$ subspace collection (\ref{QZ1SubspaceCollection}). The purpose of this subsection is to define these two operators and prove that they are congruent operators.%
\index{operators!congruent}

Following Section 7.2 of \citeAY{Graeme:2016:ETC},
associated to this $Y(2)$ subspace collection (\ref{QYSubspaceCollection}) is the $Y$-tensor function (i.e., the associated $Y$-operator)%
\index{Yo@$Y$-operator}
which is a linear operator-valued function $\BYE(L_1,L_2)$ acting on the space $\CV$.
To obtain this function we begin by solving the following problem: for a given $\BE_1\in\CV$, find a unique vector pair: $\BE\in\CE$, $\BJ\in\CJ$ that satisfy
\begin{align}\label{QYProblem}
\BJ_2=\BL\BE_2,\;\;\BE_2=\BGP_2\BE,\;\;\BJ_2=\BGP_2\BJ,\;\;\BE_1=\BGP_1\BE,
\end{align}
also known as the $Y$-problem%
\index{Yp@$Y$-problem}
for the $Y(2)$ subspace collection (\ref{QYSubspaceCollection}) (see Chapter 7 of \citeAY{Graeme:2016:ETC}).
The associated operator $\BYE$, by definition, governs the linear relation
\begin{align}\label{QYOperator}
\BJ_1=-\BYE\BE_1,
\end{align}
where $\BJ_1=\BGP_1\BJ$.
The formula for the operator $\BYE$, as given in equation (19.29) of \citeAPY{Milton:2002:TOC} [see also Section 7.4,
equation (7.60), of \citeAY{Graeme:2016:ETC}] is%
\index{effective!operator!Y@$Y_*$ formula}
\begin{align}\label{Qeq.defeffectiveY}
\BYE=\BGP_1\BGG_2(\BGG_2\BGP_2{\BL}^{-1}\BGP_2\BGG_2)^{-1}\BGG_2\BGP_1,
\end{align}
where the inverse $(\BGG_2\BGP_2{\BL}^{-1}\BGP_2\BGG_2)^{-1}$, if it exists, is to be taken on the subspace $\CJ$. For instance, if $\Imag\BL\ge\alpha\mathbb{I}$ or $\Real\BL\ge\alpha\mathbb{I}$ for some $\alpha>0$ and Assumption \ref{QAsmProdProjCoersive} below is satisfied, then the inverse of $\BGG_2\BGP_2{\BL}^{-1}\BGP_2\BGG_2$ exists on $\CJ$.
\begin{Rem}\label{QRemLCommsWithPi2}
It follows from Remark \ref{QRemLOnSubspaceRestr} that
\begin{align}
\BGP_2\BL\BGP_2=\BGP_2\BL=\BL\BGP_2 \,\,\, \mbox{on $\CK$}.
\end{align}
which implies that if $~\BL^{-1}$ exists then
\begin{align}\label{Qeq.cominvpi2}
\BGP_2\BL^{-1}\BGP_2=\BGP_2\BL^{-1}=\BL^{-1}\BGP_2 \,\,\, \mbox{on $\CK$}.
\end{align}
\end{Rem}

\begin{Asm}\label{QAsmProdProjCoersive}
We will assume that
\begin{align}
\BGG_2\BGP_2\BGG_2\geq \beta \BGG_2\;\;\textnormal{for some }\beta>0.
\end{align}
\end{Asm}
\begin{Pro}\label{QPropNecSufCondAssump}
A necessary and sufficient condition for Assumption \ref{QAsmProdProjCoersive} to be true is
\begin{align}
\CV\cap\CJ=\{{\bf 0}\}.
\end{align}
\end{Pro}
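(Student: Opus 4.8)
The plan is to prove both implications by rephrasing Assumption \ref{QAsmProdProjCoersive} as a bound on the orthogonal projection $\BGP_1$ onto $\CV$. The forward implication is immediate: if $\BGG_2\BGP_2\BGG_2\ge\beta\BGG_2$ for some $\beta>0$ and $\Bv\in\CV\cap\CJ$, then $\BGG_2\Bv=\Bv$ (since $\Bv\in\CJ$) and $\BGP_2\Bv={\bf 0}$ (since $\Bv\in\CV$ is orthogonal to $\CH^{(1)}=\operatorname{Ran}\BGP_2$), so testing the inequality on $\Bv$ gives $0=(\Bv,\BGP_2\Bv)\ge\beta\|\Bv\|^2$, forcing $\Bv={\bf 0}$.

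For the converse I would first rewrite the assumption. Since $\CV\oplus\CH^{(1)}=\CK=\CE\oplus\CJ$, both $\BGP_1+\BGP_2$ and $\BGG_1+\BGG_2$ equal the orthogonal projection onto $\CK$, so $\BGP_1+\BGP_2=\BGG_1+\BGG_2$; using $\BGG_2\BGG_1=0$ and $\BGG_2^2=\BGG_2$ this yields $\BGG_2\BGP_2\BGG_2=\BGG_2-\BGG_2\BGP_1\BGG_2$. Hence Assumption \ref{QAsmProdProjCoersive} is equivalent to the existence of $\beta>0$ with $\BGG_2\BGP_1\BGG_2\le(1-\beta)\BGG_2$. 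Restricting to $\CJ$, on which $\BGG_2$ acts as the identity, and noting that for $\Bv\in\CJ$ the quadratic form reads $(\Bv,\BGG_2\BGP_1\BGG_2\Bv)=\|\BGP_1\Bv\|^2$, the assumption becomes equivalent to
\begin{align}
\sup_{\Bv\in\CJ\setminus\{{\bf 0}\}}\frac{\|\BGP_1\Bv\|^2}{\|\Bv\|^2}\le 1-\beta<1.
\end{align}
It therefore remains to show that $\CV\cap\CJ=\{{\bf 0}\}$ forces this supremum to be strictly below $1$.

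The crucial input is that $\CV$ is \emph{finite-dimensional}. Indeed, for $\Bv\in\CV$ one has $\Bv\perp\CP_i^{(1)}=\CP_i\cap\CK$, so its $\CP_i$-component satisfies $\BGL_i\Bv\in\CP_i\ominus(\CP_i\cap\CK)=\BGL_i\CU$ (the last equality following from $\CK=\CU^{\perp}$ together with $(\CP_i\cap\CU^{\perp})^{\perp}=\CP_i^{\perp}+\CU$); hence $\CV\subseteq\BGL_1\CU\oplus\BGL_2\CU$, which is finite-dimensional because $\dim\CU<\infty$ (this is also recorded in Corollary \ref{QCorFinDimSubspaces}). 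Consequently $\BGP_1$ has finite rank, so the positive self-adjoint operator $\BS:=\BGG_2\BGP_1\BGG_2$ restricted to $\CJ$ is compact and therefore attains its norm at some unit vector $\Bv_0\in\CJ$. The supremum above then equals $\|\BGP_1\Bv_0\|^2\le\|\Bv_0\|^2=1$, and equality would force $\BGP_1\Bv_0=\Bv_0$, i.e.\ $\Bv_0\in\CV\cap\CJ$ with $\Bv_0\ne{\bf 0}$, contradicting the hypothesis; thus the supremum is $<1$ and the assumption holds. The main obstacle is precisely this attainment step: in infinite dimensions $\CV\cap\CJ=\{{\bf 0}\}$ by itself does not make the supremum strictly less than $1$ (the two subspaces could approach each other asymptotically), and it is the finite-dimensionality of $\CV$ — equivalently, the closedness of $\CV+\CJ$ — that rescues the argument through compactness.
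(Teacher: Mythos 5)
Your proof is correct; the necessity direction is identical to the paper's, but your sufficiency argument takes a genuinely different route at the key step. Both you and the paper start from the same identity $\BGG_2\BGP_2\BGG_2=\BGG_2-\BGG_2\BGP_1\BGG_2$ (a consequence of $\mathbb{I}_{\CK}=\BGG_1+\BGG_2=\BGP_1+\BGP_2$), and both hinge on $\BGP_1$ having finite rank; your inline derivation $\CV\subseteq\BGL_1\CU\oplus\BGL_2\CU$ is a correct, self-contained substitute for the paper's citation of Proposition \ref{QPropFinDimSpaceSep}. From there the paper stays with $\BGG_2\BGP_2\BGG_2|_{\CJ}$ and invokes Fredholm theory: being a finite-rank perturbation of $\mathbb{I}_{\CJ}$, it is Fredholm of index $0$ by Proposition \ref{QPropFredholmSumEquiv}, its kernel is exactly $\CV\cap\CJ$, so trivial intersection gives invertibility, and coercivity is then recovered from invertibility by a numerical-range/spectrum argument (the left endpoint of the closed numerical range lies in the spectrum, hence is positive). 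You instead work with the complementary operator $\BGG_2\BGP_1\BGG_2|_{\CJ}$, which is compact, positive semidefinite and self-adjoint, so its norm --- the supremum of $\|\BGP_1\Bv\|^2$ over unit vectors $\Bv\in\CJ$ --- is attained at an eigenvector, and attainment together with $\CV\cap\CJ=\{{\bf 0}\}$ forces that supremum strictly below $1$. Your version is more elementary (only the spectral theorem for compact self-adjoint operators, no Fredholm machinery) and makes the geometric content transparent: the angle between $\CJ$ and the finite-dimensional space $\CV$ is positive once they meet trivially. The paper's version buys coherence with the rest of the chapter, where the same Fredholm toolkit is reused (Corollary \ref{Qcor.defYE}, Propositions \ref{Qprop.defYE} and \ref{QProp.defLEschur}), and it cleanly separates ``injective iff invertible'' (index $0$) from the upgrade ``invertible and positive implies coercive.'' One minor imprecision worth fixing: your closing parenthetical ``finite-dimensionality of $\CV$ --- equivalently, the closedness of $\CV+\CJ$'' is not a literal equivalence, since a closed sum does not force $\dim\CV<\infty$; what your argument actually uses is that $\dim\CV<\infty$ implies $\CV+\CJ$ is closed, which together with trivial intersection yields the positive gap.
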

\begin{proof}
For any $\BP\in\CV\cap\CJ$ we have $\BGG_2\BP=\BP$ and $\BGP_2\BP=0$ so that $\BGG_2\BGP_2\BGG_2\BP=0$. This proves that a necessary condition for Assumption \ref{QAsmProdProjCoersive} to be true is $\CV\cap\CJ=\{{\bf 0}\}$.  We will now prove that it is also a sufficient condition. The key observations are that $\mathbb{I}_{\CK}=\BGG_1+\BGG_2=\BGP_1+\BGP_2$ which implies
\begin{align}
0\leq\BGG_2\BGP_2\BGG_2=\BGG_2(\mathbb{I}_{\CK}-\BGP_1)\BGG_2=\BGG_2-\BGG_2\BGP_1\BGG_2,
\end{align}
and $\BGP_1$ is a finite-rank operator since its range is $\CV$ which is a finite-dimensional subspace by Proposition \ref{QPropFinDimSpaceSep}. This implies that
\begin{align}
0\leq\BGG_2\BGP_2\BGG_2|_{\CJ}=\mathbb{I}_{\CJ}-\BGG_2\BGP_1\BGG_2|_{\CJ},
\end{align}
where $\BGG_2\BGP_1\BGG_2|_{\CJ}$ is a finite-rank operator on $\CJ$, the range of $\BGG_2$. This implies by Proposition \ref{QPropFredholmSumEquiv} that $\BGG_2\BGP_2\BGG_2|_{\CJ}$ is an index $0$ Fredholm operator%
\index{Fredholm operator!index $0$}
on $\CJ$. Thus, either $\BGG_2\BGP_2\BGG_2|_{\CJ}\BP=0$, $\BP\in\CJ$ has a nontrivial solution or $\BGG_2\BGP_2\BGG_2|_{\CJ}$ is invertible on $\CJ$. But $\BGG_2\BGP_2\BGG_2|_{\CJ}\BP=0$, $\BP\in\CJ$ holds if and only if $\BP\in\CV\cap\CJ$. This implies $\BGG_2\BGP_2\BGG_2|_{\CJ}$ is invertible if and only if $\CV\cap\CJ=\{{\bf 0}\}$. This proves that $\BGG_2\BGP_2\BGG_2>0$ on $\CJ$ if and only if $\CV\cap\CJ=\{{\bf 0}\}$. We now want to prove that we actually have the stronger result, namely, if $\CV\cap\CJ=\{{\bf 0}\}$ then Assumption \ref{QAsmProdProjCoersive} is true. This follows from immediately from the fact $\BGG_2\BGP_2\BGG_2|_{\CJ}$ is a positive semidefinite self-adjoint bounded operator (its norm is bounded by $1$) which is invertible. Hence, the spectrum $\sigma({\BGG_2\BGP_2\BGG_2|_{\CJ}})$ of $\BGG_2\BGP_2\BGG_2|_{\CJ}$ is contained in the closure of its
numerical range which is a real convex compact set:
$$
[\alpha,\beta]=\operatorname{cl}{ \{ (\BGG_2\BGP_2\BGG_2 \,\Bu, \Bu) \mid \Bu \in \CJ,\;\;\|\Bu\|=1 \} },
$$
and this interval $[\alpha,\beta]\subseteq [0, 1]$ has its endpoints $\alpha$ and $\beta$ belonging to $\sigma({\BGG_2\BGP_2\BGG_2|_{\CJ}})$. Thus, as the operator $\BGG_2\BGP_2\BGG_2|_{\CJ}$ is invertible, $\alpha $ must be positive and therefore we get by the definition of the numerical range that $\BGG_2\BGP_2\BGG_2|_{\CJ} \geq \alpha \, \mathbb{I}_{\CJ}$.
\end{proof}

\begin{Cor}\label{Qcor.defYE}
Define the operator-valued function $\BF(L_1,L_2):\CJ\rightarrow\CJ$ on $\bbC^2$ by
$$
\BF(L_1,L_2):=\BGG_2\BGP_2{\BL}(L_1,L_2)^{-1}\BGP_2\BGG_2.
$$
If $\CV\cap\CJ=\{{\bf 0}\}$ then $\operatorname{Ker}[\BF(L_1,L_2)]=\operatorname{Ker}[\BF(L_1,L_2)^{\dag}]=\{{\bf 0}\}$ and the following three statements are equivalent: \newline
\indent (i) $\BYE(L_1,L_2)$ is well-defined by the formula (\ref{Qeq.defeffectiveY});
\newline
\indent (ii) $\operatorname{Ran}[\BF(L_1,L_2)]=\CJ$;
\newline
\indent (iii) $\BF(L_1,L_2)$ is a Fredholm operator of index $0$.
\end{Cor}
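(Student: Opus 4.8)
The plan is to reduce everything to the single assertion that $\operatorname{Ker}\BF(L_1,L_2)=\operatorname{Ker}\BF(L_1,L_2)^{\dag}=\{\mathbf{0}\}$, after which the three equivalences drop out of elementary Fredholm theory. Throughout I would work in the regime singled out in Remark~\ref{QRem.pos} and in the discussion following \eqref{Qeq.defeffectiveY}, where $\operatorname{Im}\BL$ (or $\operatorname{Re}\BL$) is coercive; without loss of generality say $\operatorname{Im}\BL\ge\alpha\mathbb{I}$ for some $\alpha>0$, so $\BL$ is invertible. The first thing to record is that, since the projections are self-adjoint, $\BF^{\dag}=\BGG_2\BGP_2(\BL^{-1})^{\dag}\BGP_2\BGG_2$ and hence $\operatorname{Im}\BF=\BGG_2\BGP_2\operatorname{Im}(\BL^{-1})\BGP_2\BGG_2$.

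For the kernel claim I would argue by definiteness of the imaginary part. From the identity $\operatorname{Im}(\BL^{-1})=-(\BL^{-1})^{\dag}\operatorname{Im}(\BL)\BL^{-1}$ of \eqref{QFundImProps} together with $\operatorname{Im}\BL\ge\alpha\mathbb{I}$, the operator $\operatorname{Im}(\BL^{-1})$ is uniformly negative, say $\operatorname{Im}(\BL^{-1})\le -c\,\mathbb{I}$ with $c=\alpha/\|\BL\|^{2}>0$. Now take $\Bu\in\CJ$; using $\BGG_2\Bu=\Bu$ and the self-adjointness of $\BGG_2,\BGP_2$, a direct computation gives $(\Bu,\operatorname{Im}(\BF)\Bu)=(\BGP_2\Bu,\operatorname{Im}(\BL^{-1})\BGP_2\Bu)\le -c\,\|\BGP_2\Bu\|^{2}$. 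If $\BF\Bu=\mathbf{0}$ then the left-hand side equals $\operatorname{Im}(\Bu,\BF\Bu)=0$, forcing $\BGP_2\Bu=\mathbf{0}$; thus $\Bu=\BGP_1\Bu\in\CV$, and since $\Bu\in\CJ$ the hypothesis $\CV\cap\CJ=\{\mathbf{0}\}$ yields $\Bu=\mathbf{0}$. This proves $\operatorname{Ker}\BF=\{\mathbf{0}\}$. Replacing $\BL$ by $\BL^{\dag}$ (equivalently $L_1,L_2$ by $\overline{L_1},\overline{L_2}$), so that $\operatorname{Im}((\BL^{\dag})^{-1})$ is now uniformly positive, the same computation applied to $\BF^{\dag}$ gives $\operatorname{Ker}\BF^{\dag}=\{\mathbf{0}\}$.

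With both kernels trivial, the equivalences become routine, and I would argue as follows. From $\operatorname{Ker}\BF^{\dag}=\{\mathbf{0}\}$ and the identity $\operatorname{cl}\operatorname{Ran}\BF=(\operatorname{Ker}\BF^{\dag})^{\perp}$ recorded earlier, $\operatorname{Ran}\BF$ is dense in $\CJ$. For (i)$\Leftrightarrow$(ii): $\BYE$ is well defined by \eqref{Qeq.defeffectiveY} exactly when $\BF$ is boundedly invertible on $\CJ$, i.e.\ bijective; since $\operatorname{Ker}\BF=\{\mathbf{0}\}$ this is equivalent to surjectivity $\operatorname{Ran}\BF=\CJ$, which is (ii). For (ii)$\Leftrightarrow$(iii): because $\operatorname{Ker}\BF$ and $\operatorname{Ker}\BF^{\dag}$ are finite-dimensional (indeed zero), $\BF$ is Fredholm of index $\dim\operatorname{Ker}\BF-\dim\operatorname{Ker}\BF^{\dag}=0$ precisely when $\operatorname{Ran}\BF$ is closed; and a dense subspace is closed iff it is all of $\CJ$, so $\operatorname{Ran}\BF$ closed $\Leftrightarrow\operatorname{Ran}\BF=\CJ$, i.e.\ (ii). This closes the cycle.

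The main obstacle is genuinely the kernel step, which is more delicate than it appears: triviality of $\operatorname{Ker}\BF$ does \emph{not} hold for arbitrary $(L_1,L_2)\in(\bbC\setminus\{0\})^{2}$. Indeed, by Remark~\ref{QRemLOnSubspaceRestr} and \eqref{Qeq.cominvpi2} one may rewrite $\BF=\tfrac{1}{L_1}\BGG_2\BGL_1^{(1)}\BGG_2+\tfrac{1}{L_2}\BGG_2\BGL_2^{(1)}\BGG_2$ on $\CJ$, where the two summands are positive semidefinite and sum to the coercive operator $\BGG_2\BGP_2\BGG_2|_{\CJ}$ of Proposition~\ref{QPropNecSufCondAssump}; when $L_1$ and $L_2$ are antiparallel (e.g.\ $L_1=-L_2$) these contributions can cancel and produce a nontrivial kernel. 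This is exactly why the sign-definiteness of $\operatorname{Im}\BL$ or $\operatorname{Re}\BL$ from Remark~\ref{QRem.pos} is the natural hypothesis that makes the definiteness argument go through, and I would make that assumption explicit at the outset of the proof.
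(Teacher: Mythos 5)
Your proof is correct under the hypothesis you impose, and your Fredholm bookkeeping for (i)--(iii) is essentially identical to the paper's: both reduce the corollary to the two kernel statements plus the identity $\operatorname{cl}\operatorname{Ran}\BF=(\operatorname{Ker}\BF^{\dag})^{\perp}$. The real divergence is the kernel step, and there the comparison cuts the other way: the paper's proof is \emph{unconditional}, claiming $\operatorname{Ker}\BF(L_1,L_2)=\{{\bf 0}\}$ for every $(L_1,L_2)$ with $L_1L_2\neq 0$, assuming only $\CV\cap\CJ=\{{\bf 0}\}$. It argues as follows: if $\BF\BJ={\bf 0}$ with $\BJ\in\CJ$, then by the commutation relation (\ref{Qeq.cominvpi2}) the vector $\Bv:=\BGP_2\BL^{-1}\BGP_2\BJ=\BL^{-1}\BGP_2\BJ$ lies in $\CH^{(1)}$ and satisfies $\BGP_2\BGG_2\Bv={\bf 0}$; invoking the invertibility of $\BGG_2\BGP_2\BGG_2$ on $\CJ$ (Proposition \ref{QPropNecSufCondAssump}), the paper concludes $\Bv={\bf 0}$, hence $\BGP_2\BJ={\bf 0}$ and $\BJ\in\CV\cap\CJ=\{{\bf 0}\}$. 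That concluding inference is a gap: from $\BGP_2\BGG_2\Bv={\bf 0}$ one gets $\BGG_2\Bv\in\CV\cap\CJ=\{{\bf 0}\}$, i.e.\ only that $\Bv\in\CE\cap\CH^{(1)}=\CE^{(1)}$, and nothing in the argument excludes nonzero vectors of $\CE^{(1)}$.

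Your suspicion that the unconditional statement is false is correct. In $\CH=\bbC^4$ with orthonormal basis $\{f_0,f_1,f_2,f_3\}$ take $\CU=\operatorname{span}\{f_0\}$, $\CE=\operatorname{span}\{f_1+f_2,\,f_1-f_2+f_3\}$, $\CJ=\operatorname{span}\{f_1-f_2-2f_3\}$, $\CP_1=\operatorname{span}\{f_1,\,(f_0+f_3)/\sqrt{2}\}$, $\CP_2=\operatorname{span}\{f_2,\,(f_0-f_3)/\sqrt{2}\}$. This is an orthogonal $Z(2)$ collection with $\CH^{(1)}=\operatorname{span}\{f_1,f_2\}$, $\CV=\operatorname{span}\{f_3\}$, $\CE^{(1)}=\operatorname{span}\{f_1+f_2\}$, and $\CV\cap\CJ=\{{\bf 0}\}$. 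At $(L_1,L_2)=(1,-1)$, for $\BJ=f_1-f_2-2f_3$ one finds $\BGP_2\BJ=f_1-f_2$ and $\BL^{-1}\BGP_2\BJ=f_1+f_2\in\CE$, so $\BF\BJ=\BGG_2(f_1+f_2)={\bf 0}$. Thus $\BF$ vanishes identically on the one-dimensional space $\CJ$: statement (iii) holds trivially while (i) and (ii) fail, so even the asserted equivalence breaks down; and since $\BL$, $\BL_{11}$, and $\BG=(\BGG_0+\BGG_2)\BL^{-1}(\BGG_0+\BGG_2)$ are all invertible in this example, the defect propagates to Proposition \ref{Qprop.defYE} as stated.

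Finally, your coercivity hypothesis, while sufficient, can be replaced by the sharp condition with a shorter argument. If $\BF\BJ={\bf 0}$ then, as above, $\BE:=\BL^{-1}\BGP_2\BJ\in\CE^{(1)}$ and $\BJ=\BGP_1\BJ+\BL\BE$; pairing with $\BE$ and using $\CE\perp\CJ$, $\CV\perp\CH^{(1)}$, and Remark \ref{QRemLOnSubspaceRestr} gives
\begin{align*}
0=(\BE,\BJ)=(\BE,\BL\BE)=L_1\|\BGL_1^{(1)}\BE\|^{2}+L_2\|\BGL_2^{(1)}\BE\|^{2},
\end{align*}
which forces $\BE={\bf 0}$, hence $\BJ\in\CV\cap\CJ=\{{\bf 0}\}$, whenever $L_1/L_2\notin(-\infty,0)$ (the same applies to $\BF^{\dag}=\BF(\overline{L_1},\overline{L_2})$, since conjugation preserves the negative real axis). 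This one-display argument subsumes both your $\Imag$ and $\Real$ cases at once and identifies exactly the rays where your cancellation heuristic bites, as the example confirms. Alternatively, one can keep all $L_1L_2\neq0$ but add the hypothesis $\operatorname{Ker}\BL_{11}^{(1)}=\{{\bf 0}\}$, which the paper imposes later anyway (Proposition \ref{QProp.defLEschur} and the induction step): applying $\BGG_1^{(1)}$ to $\BJ=\BGP_1\BJ+\BL\BE$ yields $\BL_{11}^{(1)}\BE={\bf 0}$, and injectivity of $\BL_{11}^{(1)}$ then closes the gap.
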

\begin{proof}
Suppose $\CV\cap\CJ=\{{\bf 0}\}$. We will now show that $\operatorname{Ker}[\BF(L_1,L_2)]=\{{\bf 0}\}$. Let $\BJ\in \operatorname{Ker}[\BF(L_1,L_2)]$. Then $\BJ\in \CJ$ and ${\bf 0}=\BF(L_1,L_2)\BJ=\BGG_2\BGP_2\BL(L_1,L_2)^{-1}\BGP_2\BJ$ implying that ${\bf 0}=\BGP_2\BGG_2\BGP_2\BL(L_1,L_2)^{-1}\BGP_2\BJ$. But $\BGG_2\BGP_2\BGG_2$ is invertible on $\CJ$ (as we showed in the proof of Proposition \ref{QPropNecSufCondAssump}) and $\BGP_2$ commutes with $\BL(L_1,L_2)$ and hence with ${\BL}(L_1,L_2)^{-1}$ implying ${\bf 0}=\BGP_2\BL(L_1,L_2)^{-1}\BGP_2\BJ=\BL(L_1,L_2)^{-1}\BGP_2\BJ$. Therefore, we conclude that $\BGP_2\BJ={\bf 0}$ which implies $\BJ\in \CV\cap\CJ=\{{\bf 0}\}$ and hence $\operatorname{Ker}[\BF(L_1,L_2)]=\{{\bf 0}\}$. As $\BF(L_1,L_2)^{\dag}=\BF(\overline{L_1},\overline{L_2})$, this implies $\operatorname{Ker}[\BF(L_1,L_2)^{\dag}]=\{{\bf 0}\}$. The proof of the rest of the statement now follows immediately from these facts and the fact that since $\CJ$ is a Hilbert space then $\operatorname{Ker}[\BF(L_1,L_2)^{\dag}]^\perp=\operatorname{cl}\{\operatorname{Ran}[\BF(L_1,L_2)]\}$.
\end{proof}

Using this corollary we can now give, in the next proposition, a condition on $\BL$ that implies $\BYE$ is well-defined by the formula (\ref{QYOperator}).
\begin{Pro}\label{Qprop.defYE}
If $(L_1,L_2)\in \bbC^2$ is such that $\BL(L_1,L_2)$ is invertible and the operator $\BG(L_1,L_2):=(\BGG_0+\BGG_2)\BL(L_1,L_2)^{-1}(\BGG_0+\BGG_2):\CU\oplus\CJ\rightarrow\CU\oplus\CJ$ is also invertible then the operator $\BF(L_1,L_2):\CJ\rightarrow\CJ$ (defined in Corollary \ref{Qcor.defYE}) is a Fredholm operator of index $0$. Moreover, if $\CV\cap\CJ=\{{\bf 0}\}$ then $\BF(L_1,L_2)$ is invertible and $\BYE(L_1,L_2)$ is well-defined by formula (\ref{Qeq.defeffectiveY}).
\end{Pro}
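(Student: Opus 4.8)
The plan is to compare $\BF(L_1,L_2)$ with the compression $\BM:=\BGG_2\BL(L_1,L_2)^{-1}\BGG_2$ of $\BL(L_1,L_2)^{-1}$ onto $\CJ$. I would show that $\BM$ is a Fredholm operator of index $0$, that $\BF-\BM$ is finite-rank, and then conclude via Proposition \ref{QPropFredholmSumEquiv} that $\BF$ is Fredholm of index $0$; the two cited Fredholm propositions do all the heavy lifting, and Corollary \ref{Qcor.defYE} then disposes of the ``moreover'' part.

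First I would establish that $\BM$ is Fredholm of index $0$. Since $\BL(L_1,L_2)$ is invertible, $\BG(L_1,L_2)$ is a bounded operator on $\CU\oplus\CJ$, and with respect to this orthogonal decomposition its $(2,2)$-block (the $\CJ\to\CJ$ block) is $\BGG_2\BG\BGG_2=\BGG_2(\BGG_0+\BGG_2)\BL^{-1}(\BGG_0+\BGG_2)\BGG_2=\BGG_2\BL^{-1}\BGG_2=\BM$, using $\BGG_2\BGG_0=0$. As $\CU$ is finite-dimensional and $\BG$ is invertible, hence a Fredholm operator of index $0$, Proposition \ref{QPropFredholmBlockEquiv} applied to $\BG$ on $\CU\oplus\CJ$ yields that $\BM$ is a Fredholm operator of index $0$ on $\CJ$.

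Next I would reduce $\BF$ to $\BM$ up to finite rank. For $\BJ\in\CJ\subseteq\CK$ we have $\BGG_2\BJ=\BJ$, so $\BF\BJ=\BGG_2\BGP_2\BL^{-1}\BGP_2\BJ$. Applying the commutation identity $\BGP_2\BL^{-1}\BGP_2=\BL^{-1}\BGP_2$ on $\CK$ from Remark \ref{QRemLCommsWithPi2} gives $\BF\BJ=\BGG_2\BL^{-1}\BGP_2\BJ$, and since $\BGP_1+\BGP_2=\mathbb{I}_{\CK}$ on $\CK$ we have $\BGP_2\BJ=\BJ-\BGP_1\BJ$, so that
\begin{align}
\BF\BJ=\BGG_2\BL^{-1}\BJ-\BGG_2\BL^{-1}\BGP_1\BJ=\BM\BJ-\BGG_2\BL^{-1}\BGP_1\BJ .
\end{align}
The correction $\BJ\mapsto\BGG_2\BL^{-1}\BGP_1\BJ$ has range contained in the finite-dimensional space $\BGG_2\BL^{-1}\CV$, since $\CV$ is finite-dimensional by Proposition \ref{QPropFinDimSpaceSep}; hence it is finite-rank. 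Writing $\BM=\BA+\BK$ with $\BA$ invertible and $\BK$ finite-rank (Proposition \ref{QPropFredholmSumEquiv}), we get $\BF=\BA+\big(\BK-\BGG_2\BL^{-1}\BGP_1|_{\CJ}\big)$, a sum of an invertible operator and a finite-rank operator, so $\BF$ is a Fredholm operator of index $0$ by Proposition \ref{QPropFredholmSumEquiv}. This proves the first assertion.

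For the ``moreover'' part, assuming in addition $\CV\cap\CJ=\{{\bf 0}\}$, Corollary \ref{Qcor.defYE} gives $\operatorname{Ker}[\BF]=\operatorname{Ker}[\BF^{\dag}]=\{{\bf 0}\}$; since we have just verified statement (iii) of that corollary, the equivalent statements (i) and (ii) hold, so $\operatorname{Ran}[\BF]=\CJ$ and $\BYE(L_1,L_2)$ is well-defined by formula (\ref{Qeq.defeffectiveY}). A Fredholm operator of index $0$ with trivial kernel is invertible (the remark following the definition of the Fredholm index), so $\BF$ is invertible. I expect the main obstacle to be the bookkeeping in the finite-rank reduction: one must apply the commutation relation of Remark \ref{QRemLCommsWithPi2} on the correct subspace $\CK$ and verify that the perturbation genuinely factors through the finite-dimensional space $\CV$. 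Once the identity $\BF=\BM-\BGG_2\BL^{-1}\BGP_1|_{\CJ}$ is secured, the Fredholm conclusions are immediate from the cited propositions.
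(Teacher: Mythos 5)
Your proposal is correct and takes essentially the same route as the paper's proof: both identify $\BGG_2\BL^{-1}\BGG_2$ as the $(2,2)$-block of the invertible operator $\BG$, apply Proposition \ref{QPropFredholmBlockEquiv} to get that it is Fredholm of index $0$, use the commutation relation of Remark \ref{QRemLCommsWithPi2} together with $\BGP_1+\BGP_2=\mathbb{I}_{\CK}$ to exhibit $\BF$ as that block minus the finite-rank operator $\BGG_2\BL^{-1}\BGP_1|_{\CJ}$, and then invoke Corollary \ref{Qcor.defYE} for the ``moreover'' part. The only difference is cosmetic: you justify stability of the index under the finite-rank perturbation by applying Proposition \ref{QPropFredholmSumEquiv} in both directions (writing the block as invertible plus finite-rank and absorbing the perturbation), whereas the paper cites the invariance of the Fredholm index under compact perturbations from Gohberg's book, so your version is marginally more self-contained.
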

\begin{proof}
Suppose that $(L_1,L_2)\in \bbC^2$ such that $\BL(L_1,L_2)$ is invertible and the operator $\BG(L_1,L_2):\CU\oplus\CJ\rightarrow\CU\oplus\CJ$ defined above is also invertible. Then with respect to the orthogonal decomposition of the Hilbert space $\CU\oplus\CJ$, with the corresponding orthogonal projections $\BGG_0$ and $\BGG_2$, we can write this bounded linear operator in the $2\times 2$ block matrix form
\begin{align}
\BG(L_1,L_2)=
\begin{bmatrix}
\BG_{11} & \BG_{12}\\
\BG_{21} & \BG_{22}
\end{bmatrix},
\end{align}
where $\BG_{22}:\CJ\rightarrow \CJ$ is the operator
\begin{align}
\BG_{22}=\BGG_2\BG(L_1,L_2)\BGG_2=\BGG_2\BL(L_1,L_2)^{-1}\BGG_2.
\end{align}
It follows immediately from Proposition \ref{QPropFredholmBlockEquiv} that $\BG_{22}$ is a Fredholm operator of index $0$. Using the relation (\ref{Qeq.cominvpi2}) and the decomposition $\BGG_2=\BGP_1\BGG_2+\BGP_2\BGG_2$, we can link $\BG_{22}$ and $\BF(L_1,L_2)$ by the following relation:
\begin{align}
\BF(L_1,L_2)&=\BGG_2\BGP_2\BL(L_1,L_2)^{-1}\BGP_2\BGG_2\\
&=\BG_{22}-\BGG_2\BL(L_1,L_2)^{-1}\BGP_1\BGG_2.
\end{align}
Thus, as $\operatorname{Ran}\BGP_1=\CV$ is a finite-dimensional space (see Corollary \ref{QCorFinDimSubspaces}), by the invariance of the Fredholm index under perturbation by a compact operator (see Theorem 4.1, p.\ 355, section 15.4 of \citeAY{Gohberg:2003:BCO}), we obtain that $\BF(L_1,L_2)$ is a Fredholm operator of index $0$. The rest of the statements of this proposition now follow immediately from these facts and Corollary \ref{Qcor.defYE}. This completes the proof.
\end{proof}

We will now show that just as $\BLE$ satisfies the homogeneity and Herglotz property so too does $\BYE$ under the Assumption \ref{QAsmProdProjCoersive}, that is, we prove the following:

{\bf The homogeneity property:}
\begin{align}
\BYE(L_1,L_2)=\frac{1}{c}\BYE(cL_1,cL_2)
\end{align}
is satisfied for all choices of constants $c\not=0$.

{\bf The Herglotz property:}
\begin{align}\label{Qeq.HergL}
\Imag\BYE(L_1,L_2)>0,\;\;\textnormal{when } \Imag(L_1)>0\textnormal{ and }\Imag(L_2)>0.
\end{align}

\begin{proof}
We will prove these two properties under Assumption \ref{QAsmProdProjCoersive}. First, the homogeneity property follows immediately from formula (\ref{Qeq.defeffectiveY}) and the homogeneity property of $\BLE$.  Next, the  Herglotz  property of $\BYE$ will be proven in the same as for $\BLE$ by using formula (\ref{Qeq.defeffectiveY}) and the identities (\ref{QFundImProps}).
[Alternatively, the Herglotz property of $\BYE$ follows directly from Eq.\ (2.82) of Chapter 2 of \citeAY{Graeme:2016:ETC} in the same way as the Herglotz property of $\BLE$ follows from Eq.\ (2.68) of Chapter 2 of \citeAY{Graeme:2016:ETC} or \eq{QLposit}.]  First, we obtain that under the Herglotz conditions $ \Imag(L_1)>0$ and $\Imag(L_2)>0$, that $\Imag(\BL^{-1})$ is uniformly negative. Then by Assumption \ref{QAsmProdProjCoersive}, it follows that the operator $\Imag(\BGG_2\BGP_2{\BL}^{-1}\BGP_2\BGG_2)$ is uniformly negative on $\CJ$ and that $\Imag(\BGG_2(\BGG_2\BGP_2{\BL}^{-1}\BGP_2\BGG_2)^{-1}\BGG_2)$ is coercive on $\CJ$. From these facts we conclude that $$\Imag(\BGP_1\BGG_2(\BGG_2\BGP_2{\BL}^{-1}\BGP_2\BGG_2)^{-1}\BGG_2\BGP_1)>0 \ \mbox{ on $\CV$.}$$
\end{proof}

The relationship between $\BLE$ and $\BYE$, proven in Section 19.1 of \citeAPY{Milton:2002:TOC} and for two-component composites and in the abstract setting in Section 7.21 of \citeAY{Graeme:2016:ETC},
 is given by
\begin{align}\label{QRelBetweenLAndYEffTens}
\BLE = \BGG_0\BL\BGG_0-\BGG_0\BL\BGP_1[\BGP_1\BL\BGP_1+\BYE]^{-1}\BGP_1\BL\BGG_0.
\end{align}
\begin{proof}
For every $\Be\in \CU$ which has a solution to the $Z$-problem, there exists $\Bj\in \CU$, $\BE\in\CE$, and $\BJ\in\CJ$ such that
\begin{align}
\Bj+\BJ=\BL(\Be+\BE),\;\;\BLE\Be=\Bj,
\end{align}
the latter by definition of $\BLE$. It follows from Remark \ref{QRemLCommsWithPi2} that $\BE_1=\BGP_1\BE\in\CV$ is a solution to the $Y$-problem (\ref{QYProblem}), i.e.,
\begin{align}
\BJ_2=\BL\BE_2,\;\;\BE_2=\BGP_2\BE,\;\;\BJ_2=\BGP_2\BJ,\;\;\BE_1=\BGP_1\BE,
\end{align}
and, by definition we have
\begin{align}
\BJ_1=-\BYE\BE_1,\;\;\BJ_1=\BGP_1\BJ.
\end{align}
Thus we can write
\begin{align}
\Bj+\BJ_1+\BJ_2=\BL(\Be+\BE_1+\BE_2).
\end{align}
We will now prove from these facts that the relationship (\ref{QRelBetweenLAndYEffTens}) holds. First, we solve for $\BE_1$ in terms of $\Be$ from the identity
\begin{align}
-\BYE\BE_1=\BJ_1=\BGP_1\BL(\Be+\BE_1)=\BGP_1\BL(\Be+\BGP_1\BE_1)
\end{align}
and we find that
\begin{align}
\BE_1=-\BGP_1(\BGP_1\BL\BGP_1+\BYE)^{-1}\BGP_1\BL\Be,
\end{align}
where the inverse is taken on the subspace $\CV$. It then follows that
\begin{align}
\Bj+\BJ_1=\BL(\Be+\BE_1)=\BL\Be+\BL\BE_1 =\BL\Be-\BL\BGP_1(\BGP_1\BL\BGP_1+\BYE)^{-1}\BGP_1\BL\Be.
\end{align}
Hence we have
\begin{align}
\BLE\Be=\Bj=\BGG_0(\Bj+\BJ_1)=[\BGG_0\BL\BGG_0-\BGG_0\BL\BGP_1(\BGP_1\BL\BGP_1+\BYE)^{-1}\BGP_1\BL\BGG_0]\Be.
\end{align}
Therefore, it follows that the relationship (\ref{QRelBetweenLAndYEffTens}) is true if $\BLE$, $\BYE$, and $(\BGP_1\BL\BGP_1+\BYE)^{-1}$ are well-defined operators (which is true, for instance, when $\Imag(L_1)>0$, $\Imag(L_2)>0$, and Assumption \ref{QAsmProdProjCoersive} holds).
\end{proof}

Now consider the Hilbert space $\CH^{(1)}$, i.e., the orthogonal $Z(2)$ subspace collection (\ref{QZ1SubspaceCollection}). First, we emphasize here that the operator $\BL$ commutes with the orthogonal projection $\BGP_2$ whose range is $\CH^{(1)}$ and hence by Remark \ref{QRemLOnSubspaceRestr} it follows that
\begin{align}
\BL=\BGP_2\BL\BGP_2=L_1\BGL^{(1)}_1+L_2\BGL^{(1)}_2 \textnormal{ on }\CH^{(1)},
\end{align}
where now $\BGL^{(1)}_1$ and $\BGL^{(1)}_2$ on $\CH^{(1)}$ are the orthogonal projections onto $\CP_1^{(1)}$ and $\CP_2^{(1)}$, respectively.
Now associated with the operator $\BL$ on $\CH^{(1)}$ and this orthogonal $Z(2)$ subspace collection is the linear operator-valued function $\BLE^{(1)}(L_1,L_2)$ (i.e., the associated $Z$-operator) acting on the subspace $\CU^{(1)}$ which is the solution of the corresponding $Z$-problem in this new $Z(2)$ subspace collection. The results in section \ref{QSecFormProb} now apply to this operator $\BLE^{(1)}(L_1,L_2)$ which, just like $\BLE$, satisfies the homogeneity, Herglotz, and normalization properties%
\index{homogeneity property}%
\index{Herglotz property}%
\index{normalization property}
[i.e., the normalization property in this case is $\BLE^{(1)}(1,1)=\mathbb{I}_{\CU^{(1)}}$]. In particular, $\BLE^{(1)}$  admits the two following representation formulas (when they are well-defined) similar to the formulas and  (\ref{QZOperatorFormulaschur}) (\ref{QZOperatorFormula}) but now on the orthogonal $Z(2)$ subspace collection (\ref{QZ1SubspaceCollection}):
\begin{align}\label{QZ1OperatorSchur}
\BLE^{(1)}=\BGG_0^{(1)}\BL \BGG_0^{(1)}- \BGG_0^{(1)}\BL\BGG_1^{(1)}(\BGG_1^{(1)}\BL \BGG_1^{(1)})^{-1} \BGG_1^{(1)}\BL \BGG_0^{(1)}
\end{align}
and
\begin{align}\label{QZ1Operator}
\BLE^{(1)}=\BGG_0^{(1)}[(\BGG_0^{(1)}+\BGG_2^{(1)})\BL^{-1}(\BGG_0^{(1)}+\BGG_2^{(1)})]^{-1}\BGG_0^{(1)},
\end{align}
where $\BGG_0^{(1)}, \BGG_1^{(1)}$  and $\BGG_2^{(2)}$ are respectively the orthogonal projections onto $\CU^{(1)}$, $\CE^{(1)}$, $\CJ^{(1)}$,  which are related to the orthogonal projection $\BGP_2$ onto $\CH^{(1)}$ by $\BGP_2=\BGG_0^{(1)}+\BGG_1^{(2)}+\BGG_2^{(2)}$.
Then, can one link the operators $\BYE$ associated with $\CV$ and $\BLE^{(1)}$ associated with $\CU^{(1)}$? The answer is yes.
The first step in the proof is to show that these two operators act on finite-dimensional spaces of the same dimension, in other words that $\dim \CU^{(1)}=\dim(\CV)$. This is a consequence of the following proposition and corollary.
\begin{Pro}\label{QPropFinDimSpaceSep}
Let $\CL$ be a Hilbert space with two orthogonal projections $\BCQ_1$, $\BCQ_2$ which satisfy $\BCQ_1+\BCQ_2=\mathbb{I}_{\CL}$. Assume that $\CL$ has an orthogonal decomposition
\begin{align}
\CL=\CM\oplus\CN.
\end{align}
Define the subspaces $\CN_1,\CN_2$ by
\begin{align}
\CN_1=(\BCQ_1\CL\cap\CN)\oplus(\BCQ_2\CL\cap\CN),\;\;\CN_2=\CN\ominus\CN_1,
\end{align}
where the latter denotes the orthogonal complement of $\CN_1$ in $\CN$.
Then we have
\begin{align}\label{Qeq.spacedecomp}
\CM\oplus\CN_2=\BCQ_1\CM\oplus\BCQ_2\CM.
\end{align}
In particular, if $\CM$ is finite-dimensional then $\CN_2$ is finite-dimensional with
\begin{align}\label{Qeq.spacedim}
\dim(\CN_2)=\dim(\BCQ_1\CM)+\dim(\BCQ_2\CM)-\dim(\CM)\leq \dim(\CM).\end{align}
Moreover, $\dim(\CN_2)=\dim(\CM)$ if and only if $\BCQ_1\CL\cap\CM=\BCQ_2\CL\cap\CM=\{{\bf 0}\}$.
\end{Pro}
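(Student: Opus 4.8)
The plan is to move everything into the orthogonal decomposition induced by the two complementary projections. Since $\BCQ_1+\BCQ_2=\mathbb{I}_{\CL}$ with both $\BCQ_i$ orthogonal projections, we have $\BCQ_2=\mathbb{I}_{\CL}-\BCQ_1$, so $\BCQ_2\CL=(\BCQ_1\CL)^{\perp}$ and $\CL=\BCQ_1\CL\oplus\BCQ_2\CL$ orthogonally. Writing $W:=\BCQ_1\CM\oplus\BCQ_2\CM$ (a genuinely orthogonal sum inside $\CL$ because $\BCQ_1\CL\perp\BCQ_2\CL$), I would reduce the entire proposition to the single structural identity
\begin{align}
W^{\perp}=\CN_1.
\end{align}
Two trivial observations set this up: $\CM\subseteq W$, since each $\Bm\in\CM$ satisfies $\Bm=\BCQ_1\Bm+\BCQ_2\Bm$; and $\CM\perp\CN_2$, since $\CN_2\subseteq\CN=\CM^{\perp}$.

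The crux is the identity $W^{\perp}=\CN_1$, and this is where I would concentrate the argument. A vector $\By$ lies in $W^{\perp}$ exactly when $(\BCQ_1\Bm,\By)=(\BCQ_2\Bm,\By)=0$ for all $\Bm\in\CM$; using self-adjointness of $\BCQ_i$ this reads $\BCQ_i\By\perp\CM$, i.e. $\BCQ_i\By\in\BCQ_i\CL\cap\CN$ for $i=1,2$. Since $\By=\BCQ_1\By+\BCQ_2\By$, this is precisely the statement $\By\in(\BCQ_1\CL\cap\CN)\oplus(\BCQ_2\CL\cap\CN)=\CN_1$, and the reverse inclusion falls out of the same computation (every element of $\CN_1$ is annihilated against $W$ termwise).

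Granting this, the space identity \eqref{Qeq.spacedecomp} is immediate. Decompose $\CL=\CM\oplus\CN_1\oplus\CN_2$ orthogonally, noting $\CN_1$ is closed as an orthogonal sum of the two closed subspaces $\BCQ_i\CL\cap\CN$. Removing $\CN_1$ gives $\CM\oplus\CN_2=\CL\ominus\CN_1=\CN_1^{\perp}=(W^{\perp})^{\perp}=\operatorname{cl}W$, and this closure equals $W$ itself as soon as $\CM$ is finite-dimensional, because then $W=\BCQ_1\CM\oplus\BCQ_2\CM$ is finite-dimensional, hence closed.

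The quantitative conclusions are then bookkeeping. Comparing the two orthogonal decompositions $W=\CM\oplus\CN_2$ and $W=\BCQ_1\CM\oplus\BCQ_2\CM$ yields $\dim\CM+\dim\CN_2=\dim\BCQ_1\CM+\dim\BCQ_2\CM$, which rearranges to \eqref{Qeq.spacedim}; the bound $\dim\CN_2\le\dim\CM$ comes from $\dim\BCQ_i\CM\le\dim\CM$, since $\BCQ_i$ restricts to a surjection $\CM\to\BCQ_i\CM$. For the equality case, $\dim\CN_2=\dim\CM$ forces $\dim\BCQ_1\CM=\dim\BCQ_2\CM=\dim\CM$, i.e. each $\BCQ_i|_{\CM}$ is injective; as $\operatorname{Ker}\BCQ_1=\BCQ_2\CL$ and $\operatorname{Ker}\BCQ_2=\BCQ_1\CL$, this injectivity is exactly $\BCQ_2\CL\cap\CM=\{{\bf 0}\}$ and $\BCQ_1\CL\cap\CM=\{{\bf 0}\}$. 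The main obstacle is isolating and proving $W^{\perp}=\CN_1$, after which only orthogonality bookkeeping remains; the one genuine technical point to flag is the passage from $\operatorname{cl}W$ to $W$, which is exactly where finite-dimensionality of $\CM$ is used.
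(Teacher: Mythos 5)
Your proof is correct, and its engine is the same as the paper's: everything turns on the orthogonality between $\CN_1$ and $W=\BCQ_1\CM\oplus\BCQ_2\CM$, followed by identical dimension bookkeeping (including identifying $\operatorname{Ker}(\BCQ_1|_{\CM})=\BCQ_2\CL\cap\CM$ for the equality case). Where you genuinely differ is in how the identity (\ref{Qeq.spacedecomp}) is extracted, and your organization is sharper. The paper shows $W\subseteq\CN_1^{\perp}=\CM\oplus\CN_2$ and then that $(\CM\oplus\CN_2)\ominus W=\{{\bf 0}\}$, and from this concludes $W=\CM\oplus\CN_2$; that last step (trivial relative orthogonal complement implies equality) is valid only when $W$ is closed, a point the paper never addresses. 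Your route --- prove the exact identity $W^{\perp}=\CN_1$, then take complements to get $\CM\oplus\CN_2=\CN_1^{\perp}=(W^{\perp})^{\perp}=\operatorname{cl}W$ --- makes the issue visible and localizes exactly where finite-dimensionality of $\CM$ enters (it forces $W$ to be finite-dimensional, hence closed). This is not pedantry: for infinite-dimensional $\CM$ the unqualified identity (\ref{Qeq.spacedecomp}) is false. For example, let $\CL=\ell^2\oplus\ell^2$, let $\BCQ_1,\BCQ_2$ project onto the two summands, and let $\CM$ be the graph of a bounded, injective, self-adjoint operator $\BT$ on $\ell^2$ with dense nonclosed range, e.g.\ $\BT=\operatorname{diag}(1,\tfrac{1}{2},\tfrac{1}{3},\ldots)$. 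Then $\CN=\{(-\BT v,v)\mid v\in\ell^2\}$, so $\CN_1=\{{\bf 0}\}$ and $\CM\oplus\CN_2=\CL$, whereas $\BCQ_1\CM\oplus\BCQ_2\CM=\ell^2\oplus\operatorname{Ran}\BT$ is a proper dense subspace of $\CL$. So the statement that is true in general is the one you proved, $\CM\oplus\CN_2=\operatorname{cl}(\BCQ_1\CM\oplus\BCQ_2\CM)$, with equality to $W$ itself under the finite-dimensionality hypothesis; since the paper only ever applies the proposition with $\CM=\CU$ or $\CM=\CV$, both finite-dimensional, nothing downstream is affected, but your proof in effect repairs this small gap in the paper's argument rather than reproducing it.
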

\begin{proof}
First, $\CN_1\subseteq (\BCQ_1\CM\oplus\BCQ_2\CM)^\perp$ holds since $\BCQ_1,\BCQ_2$ are invariant on $\CN_1$ and $\CM$ is orthogonal to $\CN$ which contains $\CN_1$. This implies $\CM\subseteq\BCQ_1\CM\oplus\BCQ_2\CM \subseteq (\BCQ_1\CM\oplus\BCQ_2\CM)^{\perp\perp}\subseteq \CN_1^\perp=\CM\oplus\CN_2$. Let $\BP\in\CM\oplus\CN_2\ominus(\BCQ_1\CM\oplus\BCQ_2\CM)$. Then $\BP\in (\BCQ_1\CM\oplus\BCQ_2\CM)^\perp\subseteq \CM^\perp=\CN_1\oplus\CN_2$ implying $\BP\in \CN_2$. But by invariance of $\BCQ_1, \BCQ_2$ on $\BCQ_1\CM\oplus\BCQ_2\CM$ we must also have $\BCQ_1\BP, \BCQ_2\BP\in (\BCQ_1\CM\oplus\BCQ_2\CM)^\perp\subseteq \CN$ and thus $\BCQ_1\BP, \BCQ_2\BP\in\CN_1$. This implies $\BP=\BCQ_1\BP+\BCQ_2\BP\in \CN_1\cap\CN_2=\{{\bf 0}\}$ and hence $\BP={\bf 0}$. Therefore, $\CM\oplus\CN_2\ominus(\BCQ_1\CM\oplus\BCQ_2\CM)=\{{\bf 0}\}$ implying the relation (\ref{Qeq.spacedecomp}): $\CM\oplus\CN_2=\BCQ_1\CM\oplus\BCQ_2\CM$, as desired. Now assume that $\CM$ is a finite-dimensional space, then the relation (\ref{Qeq.spacedim}) follows immediately from the decomposition (\ref{Qeq.spacedecomp}). Next, the equality $\dim(\CN_2)=\dim(\CM)$ is equivalent to $\dim(\BQ_1 \CM)=\dim(\BQ_2 \CM)=\dim(\CM)$ and, by the rank theorem, it is equivalent to the injectivity of the restriction of $\BQ_1$ and $\BQ_2$ on $\CM$.  Using the fact $\BQ_1$ and $\BQ_2$ are two projections which satisfy $\BCQ_1+\BCQ_2=\mathbb{I}_{\CL}$, one can easily show that this injectivity condition is equivalent to $\BCQ_1\CL\cap\CM=\BCQ_2\CL\cap\CM=\{{\bf 0}\}$. This completes the proof.
\end{proof}

\begin{Cor}\label{QCorFinDimSubspaces}
The spaces $\CV$ and $\CU^{(1)}$ are finite-dimensional and orthogonal to $\CU$ and $\CV$, respectively. Furthermore, we have
\begin{align}
\CU\oplus\CV = \BGL_1\CU\oplus\BGL_2\CU,\\
\CV\oplus\CU^{(1)} = \BGG_1\CV\oplus\BGG_2\CV,
\end{align}
and
\begin{align}
\dim(\CV)=\dim(\BGL_1\CU)+\dim(\BGL_2\CU)-\dim(\CU)\leq\dim(\CU),\\
\dim(\CU^{(1)})=\dim(\BGG_1\CV)+\dim(\BGG_2\CV)-\dim(\CV)\leq\dim(\CV).
\end{align}
Moreover, the following two statements are true: \newline
\indent (i) $\dim(\CV)=\dim(\CU)$ if and only if $\CP_1\cap\CU=\CP_2\cap\CU=\{{\bf 0}\}$; \newline
\indent (ii) $\dim(\CU^{(1)})=\dim(\CV)$ if and only if $\CE\cap\CV=\CJ\cap\CV=\{{\bf 0}\}$.
\end{Cor}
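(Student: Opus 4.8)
The plan is to derive the entire corollary from two successive applications of Proposition \ref{QPropFinDimSpaceSep}: one application to pass from $\CU$ to $\CV$, and a second to pass from $\CV$ to $\CU^{(1)}$. In each case the work is purely a matter of matching the abstract data $(\CL,\CM,\CN,\BCQ_1,\BCQ_2)$ of the proposition to the concrete subspaces at hand and then reading off the conclusions.

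First I would apply the proposition with $\CL=\CH$, $\CM=\CU$, $\CN=\CK=\CE\oplus\CJ$, and the projections $\BCQ_1=\BGL_1$, $\BCQ_2=\BGL_2$; these satisfy $\BGL_1+\BGL_2=\mathbb{I}_{\CH}$ precisely because $\CH=\CP_1\oplus\CP_2$. The crucial bookkeeping step is to identify the auxiliary space $\CN_1$ of the proposition with $\CH^{(1)}$: since $\BGL_1\CH=\CP_1$ and $\BGL_2\CH=\CP_2$, we get $\CN_1=(\CP_1\cap\CK)\oplus(\CP_2\cap\CK)=\CP_1^{(1)}\oplus\CP_2^{(1)}=\CH^{(1)}$, whence $\CN_2=\CK\ominus\CH^{(1)}=\CV$. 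The proposition then delivers at once the identity $\CU\oplus\CV=\BGL_1\CU\oplus\BGL_2\CU$, the finite-dimensionality of $\CV$ together with the dimension formula $\dim(\CV)=\dim(\BGL_1\CU)+\dim(\BGL_2\CU)-\dim(\CU)\leq\dim(\CU)$, and statement (i), namely that $\dim(\CV)=\dim(\CU)$ if and only if $\CP_1\cap\CU=\CP_2\cap\CU=\{{\bf 0}\}$. Orthogonality of $\CV$ to $\CU$ is immediate since $\CV\subseteq\CK=\CH\ominus\CU$.

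Next I would apply the proposition a second time with $\CL=\CK$, $\CM=\CV$, $\CN=\CH^{(1)}$ (legitimate because $\CK=\CV\oplus\CH^{(1)}$), and the restricted projections $\BCQ_1=\BGG_1|_{\CK}$, $\BCQ_2=\BGG_2|_{\CK}$. Here the points to verify are that $\BGG_1+\BGG_2=\mathbb{I}_{\CK}$ on $\CK$, which holds because $\BGG_0$ vanishes on $\CK=\CE\oplus\CJ$, and that $\BGG_1\CK=\CE$, $\BGG_2\CK=\CJ$. Consequently $\CN_1=(\CE\cap\CH^{(1)})\oplus(\CJ\cap\CH^{(1)})=\CE^{(1)}\oplus\CJ^{(1)}$, so that $\CN_2=\CH^{(1)}\ominus[\CE^{(1)}\oplus\CJ^{(1)}]=\CU^{(1)}$. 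Since $\CV$ is already known from the first application to be finite-dimensional, the proposition now yields $\CV\oplus\CU^{(1)}=\BGG_1\CV\oplus\BGG_2\CV$, the finite-dimensionality of $\CU^{(1)}$ with $\dim(\CU^{(1)})=\dim(\BGG_1\CV)+\dim(\BGG_2\CV)-\dim(\CV)\leq\dim(\CV)$, and statement (ii), that $\dim(\CU^{(1)})=\dim(\CV)$ if and only if $\CE\cap\CV=\CJ\cap\CV=\{{\bf 0}\}$. Again $\CU^{(1)}\perp\CV$ is immediate, since $\CU^{(1)}\subseteq\CH^{(1)}=\CK\ominus\CV$.

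I do not expect any genuine obstacle beyond careful translation of notation; the only steps requiring real attention are confirming the completeness relations $\BGL_1+\BGL_2=\mathbb{I}_{\CH}$ and $\BGG_1+\BGG_2=\mathbb{I}_{\CK}$, so that the hypothesis $\BCQ_1+\BCQ_2=\mathbb{I}_{\CL}$ is genuinely met in each setting, and identifying the proposition's $\CN_1$ correctly (with $\CH^{(1)}$ in the first application and with $\CE^{(1)}\oplus\CJ^{(1)}$ in the second), which relies on the defining relations $\CP_i^{(1)}=\CP_i\cap\CK$ and $\CE^{(1)}=\CE\cap\CH^{(1)}$, $\CJ^{(1)}=\CJ\cap\CH^{(1)}$.
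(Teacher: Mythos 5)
Your proposal is correct and follows essentially the same route as the paper: two applications of Proposition \ref{QPropFinDimSpaceSep}, first with $(\CL,\CM,\CN)=(\CH,\CU,\CK)$ and projections $\BGL_1,\BGL_2$, then with $(\CL,\CM,\CN)=(\CK,\CV,\CH^{(1)})$ and projections $\BGG_1,\BGG_2$, with the same identifications of $\CN_1$ and $\CN_2$ (your $\CN_1=\CE^{(1)}\oplus\CJ^{(1)}$ coincides with the paper's $\CH^{(1)}\ominus\CU^{(1)}$ by the definition of $\CU^{(1)}$).
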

\begin{proof}
These results follow immediately from Proposition \ref{QPropFinDimSpaceSep}. Indeed, in the Hilbert space $\CH$ we have the orthogonal decomposition $\CH=\CU\oplus\CK$ and two orthogonal projections $\BGL_1, \BGL_2$ satisfying $\BGL_1+\BGL_2=\mathbb{I}_{\CH}$. The spaces $\CL, \CM, \CN, \CN_1, \CN_2$ in the above proposition are $\CH, \CU, \CK, \CH^{(1)}=\CP_1^{(1)}\oplus\CP_2^{(1)}, \CV$, respectively, since $\BGL_1\CH=\CP_1, \BGL_2\CH=\CP_2$. Similarly, on the Hilbert space $\CK$ we have $\CK=\CV\oplus\CH^{(1)}$ and there are two orthogonal projections $\BGG_1, \BGG_2$ satisfying $\BGG_1+\BGG_2=\mathbb{I}_{\CK}$. The spaces $\CL, \CM, \CN, \CN_1, \CN_2$ in the above proposition are $\CK, \CV, \CH^{(1)}, \CH^{(1)}\ominus \CU^{(1)}, \CU^{(1)}$, respectively, since $\BGG_1\CK=\CE, \BGG_2\CK=\CJ$. The corollary now follows from these identifications by Proposition \ref{QPropFinDimSpaceSep}. This completes the proof.
\end{proof}

\begin{Rem}
In the case of the conductivity equation for a two-component composite, the conditions $\CP_1\cap\CU=\CP_2\cap\CU=\{{\bf 0}\}$ always hold. This is a direct consequence of the definition of the spaces $\CP_1, \CP_2$ and $\CU$ (see Remarks \ref{Qrem.conduct1} and \ref{Qrem.conduct2}) and we deduce that $\dim \CV=\dim \CU=d$.
Concerning the conditions $\CV\cap\CJ=\{{\bf 0}\}$ and $\CV\cap\CE=\{{\bf 0}\}$, they are nearly always satisfied. Indeed,
it depends on the geometry shape of the interfaces between the two phases. Suppose that the two phases of the composite are Lipschitz domains. A
current density $\BJ \in \CV\cap\CJ$ is (by definition of $\CV$) a constant $\BC_i$ in each phase. Moreover, it belongs to $H_{\Divop}(\CD)=\{ \BU \in \BL^{2}(\CD) \mid \Divop \BU\in \BL^{2}(\CD) \}$. Thus at each interface point between the two phases, the normal component of $\BJ$ has to be continuous (see chapter I of \citeAY{Monk:2003:FEM}):
$$
[\BJ \cdot \Bn]=(\BC_2-\BC_1) \cdot \Bn ={\bf 0},
$$
where $\Bn$ denotes the unit normal vector (oriented for instance from phase 1 to phase 2). Thus, if the set of unit normal of all interface points contains $d$ linear independent vectors, then $\BJ=\BC_2=\BC_1$ in $\CD$ and the zero average condition on $\BJ$ then implies $\BJ={\bf 0}$.
In the same way, if an electrical field $\BE \in \CV\cap\CE$, it has to be a constant $\BC_i$ in each phase and it belongs to $H_{\Curlop}(\CD)=\{ \BU \in \BL^{2}(\CD)\mid \Curlop \BU\in \BL^{2}(\CD) \}$ which implies that its tangential component is continuous (see chapter I of \citeAY{Monk:2003:FEM}). This leads to
$$
[\BE \wedge \Bn]=(\BC_2-\BC_1) \wedge \Bn ={\bf 0},
$$
at each interface point between the two phases. Thus again, if  the set of unit normal of all interface points contains $d$ linear independent vectors, then $\BE=\BC_2=\BC_1$ in $\CD$ and the zero average condition on $\BE$ implies that  $\BE={\bf 0}$. Nevertheless, one can construct easily for $d\geq 2$ counterexamples (with for example linear interfaces)
where $\CV\cap\CJ \neq \{{\bf 0}\}$ or $\CV\cap\CE \neq \{{\bf 0}\}$.

For instance, consider the example from \citeAY{Milton:2002:TOC}, pp.\ 407--408, section 19.4 of a laminate of two phases laminated in direction $\Bn$: Let $f_1,f_2$ denote the volume fraction occupied by phase $i=1,2$, then the piecewise constant average value zero fields
\begin{align}
\BE=[f_2\chi_1-f_1\chi_2]\Bn,\;\;\BJ=[f_2\chi_1-f_1\chi_2]\Bv,\textit{ with }\Bn\cdot\Bv=0,
\end{align}
are curl-free and divergence-free, respectively, and therefore lie in $\CV\cap\CE \neq \{{\bf 0}\}$ and $\CV\cap\CJ \neq \{{\bf 0}\}$, respectively.
\end{Rem}

Now, that we know under some assumptions that $\dim \CV=\dim \CU^{(1)}$, the second step is to construct an invertible linear map $\BK:\CU^{(1)} \to \CV$ to show that the operators $\BYE$ and $\BLE^{(1)}$ are $\BK$ congruent,%
\index{operators!congruent}
in other words that:
$$
\BYE(L_1,L_2)=\BK\BLE^{(1)}(L_1,L_2)\BK^{\dag}.
$$
This is the purpose of the following theorem.
\begin{Thm}\label{Qthm.relYStarAndLStar1}
If $\CP_1\cap \CU=\CP_2\cap \CU=\{{\bf 0}\}$ and $\CE\cap\CV=\CJ\cap\CV=\{{\bf 0}\}$ then Assumption \ref{QAsmProdProjCoersive} holds, $\dim \CU=\dim \CV=\dim \CU^{(1)}$, and assuming the operators $\BYE=\BYE(L_1,L_2)$, $\BLE^{(1)}=\BLE^{(1)}(L_1,L_2)$ operators in (\ref{Qeq.defeffectiveY}) and (\ref{QZ1OperatorSchur}), respectively, are well-defined (which is the case, for instance, if $\BL$ is coercive) we have
\begin{align}\label{QDefNextEffTensor}
\BYE(L_1,L_2)=\BK\BLE^{(1)}(L_1,L_2)\BK^{\dag},
\end{align}
where $\BK:\CU^{(1)}\rightarrow \CV$ is the invertible operator defined by
\begin{align}\label{QDefKMap}
\BK=-(\BGP_1\BGG_1\BGP_1)^{-1}\BGP_1\BGG_1\BGP_2,
\end{align}
with the inverse of $\BGP_1\BGG_1\BGP_1$ taken on $\CV$.
\end{Thm}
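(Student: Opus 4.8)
The preliminary assertions are immediate from results already established: the hypothesis $\CJ\cap\CV=\{{\bf 0}\}$ yields Assumption \ref{QAsmProdProjCoersive} by Proposition \ref{QPropNecSufCondAssump}, and the dimension count $\dim\CU=\dim\CV=\dim\CU^{(1)}$ follows from Corollary \ref{QCorFinDimSubspaces}. For the congruence (\ref{QDefNextEffTensor}) itself, the plan is to avoid manipulating the Schur-complement formulas (\ref{Qeq.defeffectiveY}) and (\ref{QZ1OperatorSchur}) directly and instead work with the defining $Y$- and $Z$-problems, showing that a solution of the $Y$-problem on $\CK$ restricts to a solution of the $Z$-problem on $\CH^{(1)}$ and then tracking how the data correspond under $\BK$. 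Fix $\BE_1\in\CV$ and let $\BE\in\CE$, $\BJ\in\CJ$ solve the $Y$-problem (\ref{QYProblem}); write $\BE=\BE_1+\BE_2$ and $\BJ=\BJ_1+\BJ_2$ with $\BE_1,\BJ_1\in\CV$ and $\BE_2=\BGP_2\BE$, $\BJ_2=\BGP_2\BJ\in\CH^{(1)}$. By definition $\BJ_1=-\BYE\BE_1$, and by Remark \ref{QRemLCommsWithPi2} one has $\BJ_2=\BL\BE_2$ with $\BL$ acting within $\CH^{(1)}$.

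The first key step is to check that $(\BE_2,\BJ_2)$ is genuine $Z$-problem data on $\CH^{(1)}$. I would show $\BGG_2^{(1)}\BE_2=0$ and $\BGG_1^{(1)}\BJ_2=0$: for $\Bw\in\CJ^{(1)}\subseteq\CH^{(1)}\cap\CJ$ one computes $(\Bw,\BE_2)=(\BGP_2\Bw,\BE)=(\Bw,\BE)=0$ because $\CE\perp\CJ$, and symmetrically for $\BJ_2$ tested against $\CE^{(1)}$. Hence $\BE_2=\Be^{(1)}+\BE^{(1)}$ and $\BJ_2=\Bj^{(1)}+\BJ^{(1)}$ with $\Be^{(1)}:=\BGG_0^{(1)}\BE_2,\ \Bj^{(1)}:=\BGG_0^{(1)}\BJ_2\in\CU^{(1)}$, $\BE^{(1)}\in\CE^{(1)}$, $\BJ^{(1)}\in\CJ^{(1)}$, and the relation $\BJ_2=\BL\BE_2$ becomes $\Bj^{(1)}+\BJ^{(1)}=\BL(\Be^{(1)}+\BE^{(1)})$, which is precisely the $Z$-problem (\ref{QZProblem}) on $\CH^{(1)}$. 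Since $\BLE^{(1)}$ is well-defined, this forces $\Bj^{(1)}=\BLE^{(1)}\Be^{(1)}$.

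The second step relates the $\CV$-components $\BE_1,\BJ_1$ to the $\CU^{(1)}$-components $\Be^{(1)},\Bj^{(1)}$ through the orthogonal splitting $\CV\oplus\CU^{(1)}=\BGG_1\CV\oplus\BGG_2\CV$ of Corollary \ref{QCorFinDimSubspaces}. As $\BE^{(1)}\in\CE^{(1)}\subseteq\CE$ and $\BE\in\CE$, the vector $\BE_1+\Be^{(1)}=\BE-\BE^{(1)}$ lies in $\CE\cap(\BGG_1\CV\oplus\BGG_2\CV)=\BGG_1\CV$ (using $\BGG_1\CV\subseteq\CE$, $\BGG_2\CV\subseteq\CJ$); symmetrically $\BJ_1+\Bj^{(1)}\in\CJ\cap(\BGG_1\CV\oplus\BGG_2\CV)=\BGG_2\CV\subseteq\CJ$. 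From $\BJ_1+\Bj^{(1)}\in\CJ$ we get $\BGG_1(\BJ_1+\Bj^{(1)})=0$, hence $\BGP_1\BGG_1\BGP_1\BJ_1=-\BGP_1\BGG_1\Bj^{(1)}$ (using $\BGP_1\BJ_1=\BJ_1$, $\BGP_2\Bj^{(1)}=\Bj^{(1)}$); inverting $\BGP_1\BGG_1\BGP_1$ on $\CV$ gives exactly $\BK\Bj^{(1)}=\BJ_1$. For the dual relation I would form the adjoint $\BK^{\dag}=-\BGG_0^{(1)}\BGP_2\BGG_1\BGP_1(\BGP_1\BGG_1\BGP_1)^{-1}$ on $\CV$ and set $\Bw:=(\BGP_1\BGG_1\BGP_1)^{-1}\BE_1\in\CV$, so that $\BGP_1\BGG_1\Bw=\BE_1=\BGP_1(\BE_1+\Be^{(1)})$ and therefore $\BGG_1\Bw=\BE_1+\Be^{(1)}$; then $\BK^{\dag}\BE_1=-\BGG_0^{(1)}\BGP_2\BGG_1\Bw=-\BGG_0^{(1)}\BGP_2(\BE_1+\Be^{(1)})=-\Be^{(1)}$.

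Combining the two relations closes the argument: for every $\BE_1\in\CV$,
\begin{align}
\BK\BLE^{(1)}\BK^{\dag}\BE_1=\BK\BLE^{(1)}(-\Be^{(1)})=-\BK\Bj^{(1)}=-\BJ_1=\BYE\BE_1,
\end{align}
so $\BYE=\BK\BLE^{(1)}\BK^{\dag}$. It remains to record that $\BK$ is invertible: if $\BK\Bu=0$ for $\Bu\in\CU^{(1)}$ then $\BGP_1\BGG_1\Bu=0$, so $\BGG_1\Bu\in\CE\cap\CH^{(1)}=\CE^{(1)}$, whence $\|\BGG_1\Bu\|^2=(\Bu,\BGG_1\Bu)=0$ since $\Bu\perp\CE^{(1)}$; thus $\Bu\in\CU^{(1)}\cap\CJ=\CU^{(1)}\cap\CJ^{(1)}=\{{\bf 0}\}$, and injectivity together with $\dim\CU^{(1)}=\dim\CV$ gives invertibility. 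I expect the main obstacle to be the bookkeeping in the third step: correctly forming $\BK^{\dag}$ for an operator that contains the inverse of $\BGP_1\BGG_1\BGP_1$ restricted to $\CV$, and nailing down the sign $\BK^{\dag}\BE_1=-\Be^{(1)}$ that matches $\BK\Bj^{(1)}=\BJ_1$ so that the final chain produces $\BYE\BE_1$ rather than its negative. Everything else reduces to repeated use of the orthogonality $\CE\perp\CJ$ and the splitting of Corollary \ref{QCorFinDimSubspaces}.
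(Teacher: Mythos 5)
Your proof is correct, and its skeleton is the same as the paper's: both arguments bypass the Schur-complement formulas, take a solution of the $Y$-problem for a given $\BE_1\in\CV$, project it onto $\CH^{(1)}$ to manufacture a solution of the $Z$-problem there (so that $\Bj^{(1)}=\BLE^{(1)}\Be^{(1)}$), and derive $\BJ_1=\BK\Bj^{(1)}$ from $\BGG_1(\BJ_2-\Bj^{(1)})={\bf 0}$ exactly as you do. Where you genuinely diverge is the handling of the $\CE$-side relation and of the invertibility of $\BK$. The paper introduces the mirror operator $\BK^{\prime}=-(\BGP_1\BGG_2\BGP_1)^{-1}\BGP_1\BGG_2\BGP_2$ (well-defined because $\CE\cap\CV=\{{\bf 0}\}$), proves $\BE_1=\BK^{\prime}\Be^{(1)}$ by the same projection argument used for $\BJ_1$, and then establishes the purely algebraic identity $\BK^{\prime}\BK^{\dag}=-\mathbb{I}_{\CV}$ (using $\BGP_2=\mathbb{I}_{\CK}-\BGP_1$), which simultaneously yields invertibility of $\BK$ and turns $\BK\BLE^{(1)}(\BK^{\prime})^{-1}$ into $-\BK\BLE^{(1)}\BK^{\dag}$. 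You instead compute $\BK^{\dag}$ explicitly, show $\BK^{\dag}\BE_1=-\Be^{(1)}$ via the auxiliary vector $\Bw$, and prove injectivity of $\BK$ by a kernel computation ($\BGG_1\Bu\in\CE^{(1)}$, hence $\|\BGG_1\Bu\|^2=(\Bu,\BGG_1\Bu)=0$) combined with the dimension count $\dim\CU^{(1)}=\dim\CV$. Both treatments are sound; the paper's is more symmetric and packages adjoint and invertibility into one identity, while yours avoids the auxiliary operator at the cost of the direct adjoint computation. One small expository gap: your step ``$\BGP_1\BGG_1\Bw=\BE_1=\BGP_1(\BE_1+\Be^{(1)})$ and therefore $\BGG_1\Bw=\BE_1+\Be^{(1)}$'' needs the observation that $\BGP_1$ is injective on $\BGG_1\CV$ (both vectors lie in $\BGG_1\CV$, and injectivity follows from the invertibility of $\BGP_1\BGG_1\BGP_1$ on $\CV$); this is available from what you have already established, but it should be said.
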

\begin{proof}
Suppose that $\CP_1\cap \CU=\CP_2\cap \CU=\{{\bf 0}\}$, $\CE\cap\CV=\CJ\cap\CV=\{{\bf 0}\}$. It then follows that by Proposition \ref{QAsmProdProjCoersive} since $\CJ\cap\CV=\{{\bf 0}\}$ we know that Assumption \ref{QAsmProdProjCoersive} holds, by Corollary \ref{QCorFinDimSubspaces} since $\CP_1\cap \CU=\CP_2\cap \CU=\{{\bf 0}\}$ we know that $\dim \CU=\dim \CV$ and since $\CE\cap\CV=\CJ\cap\CV=\{{\bf 0}\}$ we know that $\dim \CV=\dim \CU^{(1)}$. Also, since $\CJ\cap\CV=\{{\bf 0}\}$ and $\CV$ is finite-dimensional then $\BGP_1\BGG_1\BGP_1$ is invertible on $\CV$ and hence the operator $\BK:\CU^{(1)}\rightarrow \CV$ given by (\ref{QDefKMap}) is well-defined. We will now prove it is invertible. To do this we introduce the operator $\BK^{\prime}:\CU^{(1)}\rightarrow \CV$ defined by
\begin{align}\label{QDefKPrimeMap}
\BK^{\prime}=-(\BGP_1\BGG_2\BGP_1)^{-1}\BGP_1\BGG_2\BGP_2,
\end{align}
with the inverse of $\BGP_1\BGG_2\BGP_1$ taken on $\CV$, which exists since by assumption $\CE\cap\CV=\{{\bf 0}\}$ and $\CV$ is finite-dimensional.  We will now prove that $\BK$ is invertible by showing that $\BK^{\dag}=-(\BK^{\prime})^{-1}$. This follows now from the fact that $\dim \CV=\dim \CU^{(1)}<\infty$ and on $\CV$,
\begin{gather*}
\BK^{\prime}\BK^{\dag}=(\BGP_1\BGG_2\BGP_1)^{-1}\BGP_1\BGG_2\BGP_2\BGP_2\BGG_1\BGP_1(\BGP_1\BGG_1\BGP_1)^{-1}\\
=(\BGP_1\BGG_2\BGP_1)^{-1}\BGP_1\BGG_2(\mathbb{I}_{\CK}-\BGP_1)\BGG_1\BGP_1(\BGP_1\BGG_1\BGP_1)^{-1}\\
=-(\BGP_1\BGG_2\BGP_1)^{-1}\BGP_1\BGG_2\BGP_1\BGG_1\BGP_1(\BGP_1\BGG_1\BGP_1)^{-1}=-\BGP_1\BGG_1\BGP_1(\BGP_1\BGG_1\BGP_1)^{-1}=-\mathbb{I}_{\CV}.
\end{gather*}
Now suppose the operators $\BYE=\BYE(L_1,L_2)$, $\BLE^{(1)}=\BLE^{(1)}(L_1,L_2)$ in (\ref{Qeq.defeffectiveY}) and (\ref{QZ1Operator}), respectively, are well-defined. We will prove the identity (\ref{QDefNextEffTensor}). For any $\BE_1\in \CV$ we have $\BJ_1:=-\BYE\BE_1\in\CV$ and we know that there exists a solution to the $Y$-problem, i.e., there exists an $\BE^{\prime}\in\CE$ and $\BJ^{\prime}\in\CJ$ such that $\BGP_1\BE^{\prime}=\BE_1$, $\BGP_1\BJ^{\prime}=\BJ_1$ and $\BJ_2=\BL\BE_2$ where $\BJ_2:=\BGP_2\BJ^{\prime}$ and $\BE_2:=\BGP_2\BE^{\prime}$. Moreover, we remark that
\begin{align*}
\BE_2=\BGP_2\BE^{\prime} \in \BGP_2 (\CE) \subseteq  \BGP_2 (\CE_1\oplus \CU_1\oplus \CV)=\CE_1\oplus \CU_1,
\end{align*}
and in the same way that $\BJ_2 \in \CJ_1\oplus \CU_1$.

But for this $\BE_0^{(1)}:=\BGG_0^{(1)}\BE^{\prime}$ we can construct a solution to the $Z$-problem on the $Z(2)$ subspace collection $\CH^{(1)}=\CU^{(1)}\oplus \CE^{(1)}\oplus\CJ^{(1)}=\CP_1^{(1)}\oplus\CP_2^{(1)}$, namely given $\BE_0^{(1)}\in\CU^{(1)}$ we see that the fields
\begin{align*}
 \BJ_0^{(1)}:=\BGG_0^{(1)}\BJ^{\prime}\in \CU^{(1)},\quad \BE:=\BE_2-\BE_0^{(1)}\in\CE^{(1)},\quad \BJ:=\BJ_2-\BJ_0^{(1)}\in\CJ^{(1)}
\end{align*}
are such that $\BJ_0^{(1)}+\BJ=\BJ_2=\BL(\BE_2)=\BL(\BE_0^{(1)}+\BE)$. But the $Z$-operator associated with the $Z$-problem on this $Z(2)$ subspace collection $\CH^{(1)}$ is $\BLE^{(1)}$ and hence
\begin{align}
\BJ_0^{(1)}=\BLE^{(1)}\BE_0^{(1)}.
\end{align}
We will now prove that $\BJ_1=\BK \BJ_0^{(1)}$ and $\BE_1=\BK^{\prime}\BE_0^{(1)}$. First, since $\BJ_2-\BJ_0^{(1)}=\BJ\in\CJ^{(1)}\subseteq \CJ$, we have
\begin{align}
{\bf 0} = \BGG_1(\BJ_2-\BJ_0^{(1)})=\BGG_1\BJ_2-\BGG_1\BJ_0^{(1)}
\end{align}
which leads to
\begin{align}
\BGP_1 \BGG_1\BJ_2= \BGP_1 \BGG_1 \BGP_2\BJ_0^{(1)}.
\end{align}
Then, using the relation $\BJ_2=\BJ^{\prime}-\BJ_1$ with $\BJ^{\prime}\in \CJ$ and $\BJ_1\in \CV$, we get
\begin{align}
\BGP_1\BGG_1\BGP_1\BJ_1=-\BGP_1\BGG_1\BGP_2\BJ_0^{(1)},
\end{align}
and it leads to
\begin{align}
\BJ_1=\BK\BJ_0^{(1)}.
\end{align}
Similarly, since $\BE_2-\BE_0^{(1)}=\BE\in\CE^{(1)}\subseteq \CE$ then we have
\begin{align}
{\bf 0} = \BGG_2(\BE_2-\BE_0^{(1)})=\BGG_2\BE_2-\BGG_2\BE_0^{(1)},
\end{align}
and we get that
\begin{align}
\BGP_1\BGG_2\BGP_1\BE_1=-\BGP_1\BGG_2\BGP_2\BE_0^{(1)},
\end{align}
which leads to
\begin{align}
\BE_1=\BK^{\prime}\BE_0^{(1)}.
\end{align}
It follows from these facts that
\begin{gather*}
-\BYE\BE_1=\BJ_1=\BK\BJ_0^{(1)}=\BK\BLE^{(1)}\BE_0^{(1)}=\BK\BLE^{(1)}(\BK^{\prime})^{-1}\BE^{(1)}=-\BK\BLE^{(1)}\BK^{\dag}\BE_1.
\end{gather*}
As this is true for every $\BE_1\in\CV$ it implies the relation
\begin{align}
\BYE=\BK\BLE^{(1)}\BK^{\dag},
\end{align}
as desired. This completes the proof.
\end{proof}

We want now to give necessary conditions under which the operators $\BLE^{(1)}$ is well-defined.
This question is not as clear as for $\BYE$ (see Proposition \ref{Qprop.defYE}) since the well-posedness of the effective tensors $\BLE^{(1)}$ (as for the effective $\BLE$ for the base case) is directly linked to the properties of the operators $\BL$ in the subspace decomposition $\CH^{(1)}=\CU^{(1)}\oplus \CE^{(1)}\oplus \CJ^{(1)}$ and therefore is highly dependent on the physical problem and the structure of the composite.

For instance, the representation formulas  (\ref{QZ1OperatorSchur})  and (\ref{QZ1Operator}) of $\BLE^{(1)}$ are respectively defined under the existence of the inverse $\BGG_1^{(1)}\BL \BGG_1^{(1)}$ on $\CE^{(i)}$ and of the inverse of $(\BGG_0^{(1)}+\BGG_2^{(1)})\BL^{-1}(\BGG_0^{(1)}+\BGG_2^{(1)})$ on $\CU^{(1)}\oplus \CJ^{(1)}$. We saw indeed that the existence  of these both inverses is equivalent as soon as $\BL$ is invertible on $\CH^{(1)}$ (see Proposition \ref{Qprop.equivrepresLE}). By  the Remarks \ref{QRem.inv} and \ref{QRemLOnSubspaceRestr}, the invertibility of  $\BL$ on $\CH^{(1)}$ follows from the invertibility of $\BL$ on $\CH$ which is equivalent to   $L_1\neq 0$ and $L_2\neq 0$. Moreover, if $\BL$ is coercive  (see Remark \ref{QRem.pos})  which is the case for example under the Herglotz hypothesis (\ref{Qeq.HergL}) then $\BLE^{(1)}$ is well-defined by both formulas (\ref{QZ1OperatorSchur}) and (\ref{QZ1Operator}). But what can we say if we don't suppose this strong coercivity assumption?

As the field equation recursion method is an induction method, we want to establish here some criterion of well-posedness for $\BLE^{(1)}$ that it inherits from the well-posedness of $\BLE$ in the base case.
For this purpose, the representation formula (\ref{QZ1OperatorSchur}) is more suitable since the space $\CE^{(1)}$ is constructed as a subspace of $\CE$ (which is not the case of the formula (\ref{QZ1Operator}) since $\CU^{(1)}\oplus \CJ^{(1)}$ is not included in $\CU \oplus \CJ$).
\begin{Pro}\label{QProp.defLEschur}
Let $\CW^{(1)}$ be the subspace of $\CE$ defined by
\begin{align}\label{Qeq.decompCE}
\CW^{(1)}=\CE \ominus  \CE^{(1)},
\end{align}
in other words, the orthogonal complements of $\CE^{(1)}$ in $\CE$.
Then $\CW^{(1)}$ is a finite-dimensional space and $\BL_{11}$ (the restriction of the operator $\BGG_1 \BL\BGG_1$ on $\CE$) can be represented as a $2\times 2$ block operator matrix
\begin{align}
\BL_{11}=\begin{bmatrix}
 \BA & \BB\\
\BC & \BL^{(1)}_{11}
\end{bmatrix},
\end{align}
with respect to the orthogonal decomposition (\ref{Qeq.decompCE}), where $\BL^{(1)}_{11}$ is the restriction of $\BGG_1^{(1)} \BL\BGG_1^{(1)}$ on $\CE^{(1)}$. Furthermore, if $\BL_{11}$ is invertible then $\BL_{11}^{(1)}$ is a index $0$ Fredholm operator%
\index{Fredholm operator!index $0$}
and the invertibility of $\BL_{11}^{(1)}$ is equivalent
to the injectivity condition: $\operatorname{Ker} \BL_{11}^{(1)}=\{ 0 \}$. Hence, if the representation formula (\ref{QZOperatorFormulaschur}) holds for $\BLE$ and  $\operatorname{Ker} \BL_{11}^{(1)}=\{ 0 \}$ then $\BLE^{(1)}$ is well-defined by the formula (\ref{QZ1OperatorSchur}).
\end{Pro}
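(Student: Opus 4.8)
The plan is to establish the four assertions in turn, using the finite-dimensionality of $\CW^{(1)}$ as the geometric input and then Proposition~\ref{QPropFredholmBlockEquiv}, together with the Remark following the definition of the Fredholm index, to do the analytic work.

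First, to show that $\CW^{(1)}=\CE\ominus\CE^{(1)}$ is finite-dimensional, I would argue that the restriction to $\CW^{(1)}$ of the orthogonal projection $\BGP_1$ onto $\CV$ is injective. Indeed, suppose $\Bw\in\CW^{(1)}$ satisfies $\BGP_1\Bw={\bf 0}$. Since $\CK=\CV\oplus\CH^{(1)}$ with $\BGP_2$ the projection onto $\CH^{(1)}$, this gives $\Bw=\BGP_2\Bw\in\CH^{(1)}$; combined with $\Bw\in\CE$ we get $\Bw\in\CE\cap\CH^{(1)}=\CE^{(1)}$. But $\Bw\in\CW^{(1)}$ is orthogonal to $\CE^{(1)}$, so $\Bw={\bf 0}$. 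As $\CV=\operatorname{Ran}\BGP_1$ is finite-dimensional by Corollary~\ref{QCorFinDimSubspaces}, injectivity of $\BGP_1|_{\CW^{(1)}}$ yields $\dim\CW^{(1)}\le\dim\CV<\infty$.

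Second, writing $\BL_{11}=\BGG_1\BL\BGG_1|_{\CE}$ as a $2\times 2$ block operator matrix with respect to $\CE=\CW^{(1)}\oplus\CE^{(1)}$ is routine bookkeeping; the only point requiring care is identifying the $(2,2)$ block with $\BL_{11}^{(1)}$. This follows from the relations $\BGG_1\BGG_1^{(1)}=\BGG_1^{(1)}=\BGG_1^{(1)}\BGG_1$, which hold because $\CE^{(1)}\subseteq\CE$ and both projections are self-adjoint; compressing $\BGG_1\BL\BGG_1$ to $\CE^{(1)}$ then produces exactly $\BGG_1^{(1)}\BL\BGG_1^{(1)}|_{\CE^{(1)}}$, which is $\BL_{11}^{(1)}$ by definition. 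Third, assuming $\BL_{11}$ is invertible, I would apply Proposition~\ref{QPropFredholmBlockEquiv} with $\CH_1=\CW^{(1)}$ (finite-dimensional by the first step) and $\CH_2=\CE^{(1)}$: since an invertible operator is Fredholm of index $0$, the proposition forces the $(2,2)$ block $\BL_{11}^{(1)}$ to be Fredholm of index $0$. By the Remark on index-$0$ Fredholm operators, such an operator is invertible if and only if $\operatorname{Ker}\BL_{11}^{(1)}=\{{\bf 0}\}$, giving the stated equivalence.

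Finally, the concluding statement simply assembles these facts: if the representation formula (\ref{QZOperatorFormulaschur}) holds for $\BLE$ then in particular $\BL_{11}$ is invertible, so $\BL_{11}^{(1)}$ is Fredholm of index $0$; adjoining the hypothesis $\operatorname{Ker}\BL_{11}^{(1)}=\{{\bf 0}\}$ then makes $\BL_{11}^{(1)}=\BGG_1^{(1)}\BL\BGG_1^{(1)}$ invertible on $\CE^{(1)}$, which is precisely the condition for the Schur-complement formula (\ref{QZ1OperatorSchur}) to define $\BLE^{(1)}$. I expect the first step to be the main obstacle, since everything downstream is a direct application of the already-established Fredholm machinery; the clean route is the injectivity of $\BGP_1|_{\CW^{(1)}}$ into the finite-dimensional space $\CV$, and one should resist computing $\dim\CW^{(1)}$ explicitly, as only its finiteness is needed.
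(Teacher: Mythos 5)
Your proposal is correct, and its overall architecture coincides with the paper's: reduce everything to the finite-dimensionality of $\CW^{(1)}$, then invoke Proposition \ref{QPropFredholmBlockEquiv} (with $\CH_1=\CW^{(1)}$, $\CH_2=\CE^{(1)}$) and the Remark on index-$0$ Fredholm operators, exactly as you do. The only genuine difference is in the first step. The paper argues by subspace inclusion: from $\CK=\CE\oplus\CJ=\CV\oplus\CU^{(1)}\oplus\CE^{(1)}\oplus\CJ^{(1)}$ together with $\CE^{(1)}\subseteq\CE$ and $\CJ^{(1)}\subseteq\CJ$, it deduces $\CE=\CE^{(1)}\oplus\CW^{(1)}\subseteq\CE^{(1)}\oplus\CV\oplus\CU^{(1)}$, so that $\CW^{(1)}$ is finite-dimensional because both $\CV$ and $\CU^{(1)}$ are (Corollary \ref{QCorFinDimSubspaces}). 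You instead show that $\BGP_1|_{\CW^{(1)}}$ is injective --- if $\BGP_1\Bw={\bf 0}$ then $\Bw=\BGP_2\Bw\in\CE\cap\CH^{(1)}=\CE^{(1)}$, forcing $\Bw={\bf 0}$ --- which buys the slightly sharper bound $\dim\CW^{(1)}\le\dim\CV$ and uses only the finite-dimensionality of $\CV$, not of $\CU^{(1)}$. Both arguments rest on the same corollary, and the paper's inclusion argument implicitly contains the same idea (an orthogonal complement of $\CE^{(1)}$ sitting inside $\CE^{(1)}\oplus\CV\oplus\CU^{(1)}$ projects injectively into $\CV\oplus\CU^{(1)}$); your version just runs it directly against $\CV$. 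The remaining steps you spell out --- identifying the $(2,2)$ block with $\BL_{11}^{(1)}$ via $\BGG_1\BGG_1^{(1)}=\BGG_1^{(1)}=\BGG_1^{(1)}\BGG_1$, applying the Fredholm proposition, and the final assembly --- are exactly what the paper compresses into ``all the other conclusions follow immediately from Proposition \ref{QPropFredholmBlockEquiv},'' so no gap there.
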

\begin{proof}
We have simply to prove here that $\CW^{(1)}$ is a finite-dimensional space since all the other conclusions follow immediately from Proposition \ref{QPropFredholmBlockEquiv}.
From the relations (\ref{QYSubspaceCollection}), (\ref{QZ1SubspaceCollection}) and (\ref{QSubspaceInclusionRelations}), we have that
$$
\CK=\CE\oplus \CJ = \CV \oplus \CU^{(1)}\oplus \CE^{(1)}\oplus  \CJ^{(1)} \  \mbox{ with } \CE^{(1)}\subseteq \CE \mbox{ and } \CJ^{(1)}\subseteq \CJ.
$$
Therefore, $\CE=\CE^{(1)} \oplus \CW^{(1)}  \subseteq   \CE^{(1)}\oplus \CV\oplus  \CU^{(1)}$ and, since $\CV$ and $\CU^{(1)}$ are finite dimensional spaces by Corollary \ref{QCorFinDimSubspaces}, this implies that $\CW^{(1)}$ is finite dimensional.
\end{proof}

\subsection{The induction step}\label{QSubsecInductionStep}
We now introduce a hierarchy of subspaces%
\index{subspace collections!subspace hierarchy}
\begin{align}
\CK^{(i-1)}&=\CE^{(i-1)}\oplus\CJ^{(i-1)}=\CV^{(i-1)}\oplus\CP_1^{(i)}\oplus\CP_2^{(i)}, \label{Qeq.subspacecolYi}\\
\CH^{(i)}&=\CU^{(i)}\oplus\CE^{(i)}\oplus\CJ^{(i)}=\CP_1^{(i)}\oplus\CP_2^{(i)}, \label{Qeq.subspacecolZi}
\end{align}
for $i=1,2,3,\ldots,$ where
\begin{align}
\CP_1^{(i)}&=\CP_1\cap\CK^{(i-1)},\;\; \CP_2^{(i)}=\CP_2\cap\CK^{(i-1)},\;\;
\CV^{(i-1)}=\CK^{(i-1)}\ominus(\CP_1^{(i)}\oplus\CP_2^{(i)}),\\
\CE^{(i)}&=\CE\cap\CH^{(i)}, \;\;\CJ^{(i)}=\CJ\cap\CH^{(i)},\;\;\CU^{(i)}=\CH^{(i)}\ominus(\CE^{(i)}\oplus\CJ^{(i)}).
\end{align}
For all positive integers $i$, we denote by $\BGP^{(i)}_1$ and $\BGP^{(i)}_2$ the orthogonal projections on $\CV^{(i-1)}$ and $\CH^{(i)}=\CP_1^{(i)}\oplus\CP_2^{(i)}$ associated with the orthogonal $Y(2)$ subspace collection%
\index{subspace collections!orthogonal}
(\ref{Qeq.subspacecolYi})
and by $\BGG^{(i)}_0$, $\BGG^{(i)}_1$, $\BGG^{(i)}_2$, $\BGL_1^{(i)}$ and $\BGL_2^{(i)}$ the orthogonal projections on $\CU^{(i)}$, $\CE^{(i)}$, $\CJ^{(i)}$, $\CP_1^{(i)}$ and $\CP_2^{(i)}$ associated with the orthogonal $Z(2)$ subspace collection (\ref{Qeq.subspacecolZi}).

The following theorem defines and gives the properties of the $Z$- and $Y$-operators at each level $i$ of the induction argument. We take here as convention that in the indexing of the spaces and operators, the index $0$ refers to the base case developed in sections \ref{QSecFormProb} and \ref{QSecFieldRecursion}. Also, we denote by $\BL_{11}^{(i)}$ the restriction of the operator $\BGG_1^{(i)} \BL \BGG_1^{(i)}$ on $\CE^{(i)}$.
\begin{Thm}
Assume that the operators $\BL$ on $\CH^{(0)}$ and $\BL_{11}^{(0)}$ on $\CE^{(0)}$ are invertible and let $i\in\mathbb{N}$. If, for each $k=1,....,i$, we have $\operatorname{Ker}\BL_{11}^{(k)}=\{0\}$ then for all $k=0,....,i$, the $Z$-operator $\BLE^{(k)}: \CU^{(k)} \to \CU^{(k)}$ is well-defined by both representation formulas:
\begin{eqnarray}
\BLE^{(k)}&=&\BGG_0^{(k)} \BL \BGG_0^{(k)}- \BGG_0^{(k)}\BL\BGG_1^{(k)}(\BGG_1^{(k)}\BL \BGG_1^{(k)})^{-1} \BGG_1^{(k)} \BL \BGG_0^{(k)} ] \label{Qeq.schurLEK} \\
&=& \BGG_0^{(k)}[(\BGG_0^{(k)}+\BGG_2^{(k)})\BL^{-1}(\BGG_0^{(k)}+\BGG_2^{(k)})]^{-1}\BGG_0^{(k)}, \label{Qeq.LEKrep2}
\end{eqnarray}
and $\BLE^{(k)}$ satisfies the homogeneity, normalization, and Herglotz properties. Furthermore, if $\CV^{(k)}\cap \CJ^{(k)}=\{{\bf 0}\}$ for each $k=0,....,i$, then the $Y$-operator $\BYE^{(k)}:\CV^{(k)}\to \CV^{(k)}$ is also well-defined by
\begin{align}\label{Qeq.defBYEk}
\BYE^{(k)}=\BGP_1^{(k)}\BGG_2^{(k)}(\BGG_2^{(k)}\BGP_2^{(k)}{\BL}^{-1}\BGP_2^{(k)}\BGG_2^{(k)})^{-1}\BGG_2^{(k)}\BGP_1^{(k)},
\end{align}
and $\BYE^{(k)}$ satisfies both the homogeneity and Herglotz properties. Moreover, if we also suppose that $$\CV^{(k)}\cap\CE^{(k)}=0 \mbox{ and }  \CP_1^{(k)}\cap\CU^{(k)}=\CP_2^{(k)}\cap \CU^{(k)}=\{{\bf 0}\},\ \mbox{ for all } k=0,...,i-1$$
then for all $k=0,...,i-1$ we have
\begin{align}
\dim \CU^{(0)}=\dim \CU^{(k)}=\dim \CV^{(k)}=\dim \CU^{(i)},
\end{align}
and there exists an invertible operator $\BK^{(k)}:\CU^{(k+1)}\to \CV^{(k)}$ defined by
\begin{align}\label{Qeq.recKi2}
\BK^{(k)}=-(\BGP_1^{(k)}\BGG_1^{(k)}\BGP_1^{(k)})^{-1}\BGP_1^{(k)}\BGG_1^{(k)}\BGP_2^{(k)},
\end{align}
such that $\BYE^{(k)}$ and $\BLE^{(k+1)}$ are linked by the following relation:
\begin{align}\label{Qeq.recYi2}
\BYE^{(k)}=\BK^{(k)}\BLE^{(k+1)}\, (\BK^{(k)})^{\dag}.
\end{align}
\end{Thm}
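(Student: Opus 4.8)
The plan is to prove all the assertions by induction on the level, exploiting the fact that each stage of the hierarchy is a faithful replica of the base case. By construction $\CH^{(k)}=\CU^{(k)}\oplus\CE^{(k)}\oplus\CJ^{(k)}=\CP_1^{(k)}\oplus\CP_2^{(k)}$ is again an orthogonal $Z(2)$ subspace collection, and the nesting $\CP_i^{(k)}\subseteq\CP_i$ together with Remark \ref{QRemLOnSubspaceRestr} (applied recursively) guarantees that $\BL$ restricted to $\CH^{(k)}$ still has the canonical two-phase form $L_1\BGL_1^{(k)}+L_2\BGL_2^{(k)}$. Consequently every result established in Sections \ref{QSecFormProb} and \ref{QSecFieldRecursion} for the base collection applies verbatim, with all symbols decorated by the superscript $(k)$, once the relevant hypotheses are verified at that level. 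The finite-dimensionality of $\CU^{(k)}$ and $\CV^{(k)}$ needed for this propagates automatically: starting from $\dim\CU^{(0)}<\infty$, Corollary \ref{QCorFinDimSubspaces} bounds $\dim\CV^{(k)}\le\dim\CU^{(k)}$ and $\dim\CU^{(k+1)}\le\dim\CV^{(k)}$ at each stage.

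The induction is driven by the invertibility of $\BL_{11}^{(k)}$, which is the single property that must be carried from one level to the next. First I would note that invertibility of $\BL$ on $\CH^{(k)}$ is never in doubt: it follows from $L_1\neq0$, $L_2\neq0$ by Remarks \ref{QRem.inv} and \ref{QRemLOnSubspaceRestr}, and this is what lets Proposition \ref{Qprop.equivrepresLE} equate the two representation formulas \eqref{Qeq.schurLEK} and \eqref{Qeq.LEKrep2} whenever one of them is well-defined. For $k=0$ the invertibility of $\BL_{11}^{(0)}$ is assumed, so $\BLE^{(0)}=\BLE$ is well-defined by the Schur formula \eqref{QZOperatorFormulaschur}; this is the base case, and the homogeneity, normalization, and Herglotz properties for $\BLE^{(0)}$ are exactly those proved in Section \ref{QSecFormProb}. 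For the inductive step, assuming $\BL_{11}^{(k)}$ invertible and given the hypothesis $\operatorname{Ker}\BL_{11}^{(k+1)}=\{0\}$, I apply Proposition \ref{QProp.defLEschur} with the level-$k$ collection playing the role of the base collection: it yields that $\BL_{11}^{(k+1)}$ is an index-$0$ Fredholm operator, whence triviality of its kernel upgrades to full invertibility, and therefore $\BLE^{(k+1)}$ is well-defined by \eqref{Qeq.schurLEK}. This closes the induction for the $Z$-operators, and the analytic properties of each $\BLE^{(k)}$ again transfer from Section \ref{QSecFormProb}.

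The $Y$-operators and the congruence relation are then assembled level by level from the results already in hand. Under the added hypothesis $\CV^{(k)}\cap\CJ^{(k)}=\{0\}$, Proposition \ref{QPropNecSufCondAssump} supplies Assumption \ref{QAsmProdProjCoersive} at level $k$, and Corollary \ref{Qcor.defYE} together with Proposition \ref{Qprop.defYE} then makes $\BYE^{(k)}$ well-defined by \eqref{Qeq.defBYEk}; its homogeneity and Herglotz properties follow exactly as in the base-case proof in Section \ref{QSecFieldRecursion}. Imposing finally $\CV^{(k)}\cap\CE^{(k)}=0$ and $\CP_1^{(k)}\cap\CU^{(k)}=\CP_2^{(k)}\cap\CU^{(k)}=\{0\}$, Corollary \ref{QCorFinDimSubspaces} gives the chain of dimension equalities $\dim\CU^{(k)}=\dim\CV^{(k)}=\dim\CU^{(k+1)}$, and Theorem \ref{Qthm.relYStarAndLStar1}, read at level $k$, produces the invertible map $\BK^{(k)}$ of \eqref{Qeq.recKi2} together with the congruence \eqref{Qeq.recYi2}. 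The main obstacle is not any single computation but the bookkeeping of the induction: one must check that the only genuinely propagating hypothesis is the invertibility of $\BL_{11}^{(k)}$, and that the kernel conditions $\operatorname{Ker}\BL_{11}^{(k+1)}=\{0\}$ are exactly strong enough---given the index-$0$ Fredholm conclusion of Proposition \ref{QProp.defLEschur}---to re-establish invertibility at the next level without any further coercivity assumption on $\BL$.
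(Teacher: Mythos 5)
Your proposal is correct and takes essentially the same route as the paper's own proof: an induction on the level driven by Proposition \ref{QProp.defLEschur} (index-$0$ Fredholm plus trivial kernel upgrades $\operatorname{Ker}\BL_{11}^{(k)}=\{0\}$ to invertibility of $\BL_{11}^{(k)}$), with Proposition \ref{Qprop.equivrepresLE} equating the two representation formulas via invertibility of $\BL$ on $\CH^{(k)}$, Proposition \ref{Qprop.defYE} yielding $\BYE^{(k)}$, and Theorem \ref{Qthm.relYStarAndLStar1} supplying the dimension equalities, the operator $\BK^{(k)}$, and the congruence. The only differences are cosmetic: you cite Proposition \ref{QPropNecSufCondAssump}, Corollary \ref{Qcor.defYE}, and Corollary \ref{QCorFinDimSubspaces} explicitly, where the paper leaves these subsumed in Propositions \ref{Qprop.defYE} and Theorem \ref{Qthm.relYStarAndLStar1}.
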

\begin{proof}
Suppose the operators $\BL$ on $\CH^{(0)}$ and $\BL_{11}^{(0)}$ on $\CE^{(0)}$ are invertible and fix an integer $i\in\mathbb{N}$. First, by Proposition \ref{Qprop.equivrepresLE}, in the case $k=0$ the $Z$-operator $\BLE^{(0)}:\CU^{(0)}\to \CU^{(0)}$ is well-defined by both formulas (\ref{Qeq.schurLEK}) and (\ref{Qeq.LEKrep2}) and hence satisfies the homogeneity, normalization, and Herglotz properties.%
\index{homogeneity property}%
\index{normalization property}%
\index{Herglotz property}

Next, suppose that for each $k=1,....,i$, we have $\operatorname{Ker}\BL_{11}^{(k)}=\{0\}$. Hence, as $\BL_{11}^{(0)}$ is invertible, we deduce by induction by using the Proposition \ref{QProp.defLEschur} that all the operators $\BL_{11}^{(1)},\ldots, \BL_{11}^{(i)}$ are invertible and that the $Z$-operator $\BLE^{(k)}:\CU^{(k)}\to \CU^{(k)}$ is well-defined by formula (\ref{Qeq.schurLEK}). 
Then, as $\BL$ is assumed invertible on $\CH$,  we have $L_1\neq 0$  and $L_2\neq 0$ (see Remark \ref{QRem.inv}) and this implies that $\BL$ is  invertible on  $\CH^{(k)}$ too (see Remark \ref{QRemLOnSubspaceRestr}). Hence, it follows from Proposition \ref{Qprop.equivrepresLE} that the $Z$-operator $\BLE^{(k)}$ is also well-defined by the formula (\ref{Qeq.LEKrep2}) and thus satisfies the homogeneity, normalization, and Herglotz properties.

Suppose now that in addition, $\CV^{(k)}\cap \CJ^{(k)}=\{{\bf 0}\}$ for each $k=0,....,i$. Then it follows immediately from Proposition \ref{Qprop.defYE} that the $Y$-operator $\BYE^{(k)}:\CV^{(k)}\to \CV^{(k)}$ is also well-defined by the formula (\ref{Qeq.defBYEk}) and hence satisfies both the homogeneity and Herglotz properties for all $k=0,....,i$.

Finally, suppose that in addition, $$\CV^{(k)}\cap\CE^{(k)}=0 \mbox{ and }  \CP_1^{(k)}\cap\CU^{(k)}=\CP_2^{(k)}\cap \CU^{(k)}=\{{\bf 0}\},\ \mbox{ for all } k=0,...,i-1.$$ Then the rest of the proof of this theorem now follows immediately from Theorem \ref{Qthm.relYStarAndLStar1}. This completes the proof.
\end{proof}

\section*{Acknowledgements}
Graeme Milton is grateful to the National Science Foundation for support through the Research Grant DMS-1211359. Aaron Welters is grateful for the support from the U.S.\ Air Force Office of Scientific Research (AFOSR) through the Air Force's Young Investigator Research Program (YIP) under the grant FA9550-15-1-0086.

\bibliographystyle{mod-xchicago}
\bibliography{newref,tcbook}

\end{document}